\newtheorem{remark}{Remark}
\newtheorem{lemma}{Lemma}
\newtheorem{theorem}{Theorem}
\newtheorem{proposition}{Proposition}
\newtheorem{example}{Example}
\newtheorem{definition}{Definition}
\crefname{equation}{Eq.}{Eqs.}
\crefname{claim}{Claim}{Claims}
\tikzset{>=stealth, shorten >=1pt}
\tikzset{every edge/.style = {thick, ->, draw}}
\tikzset{every loop/.style = {thick, ->, draw}}
\tikzset{  
  node distance=4em,
  >=stealth',
  shadow/.style		= {opacity=.25, black, shadow xshift=0.08, shadow yshift=-0.08},
  plainnode/.style 	= {draw, ultra thick, fill=gray!10},
  pl0/.style	       = {circle, minimum size=6mm, inner sep=0mm,plainnode, drop shadow = {shadow}},
  ind/.style         = {circle,draw,fill=black},
  nind/.style        = {circle,draw},
  wt/.style          = {text = white, yshift = -0.1em,scale=0.9}
}
\newtheorem{assumption}[theorem]{Assumption}
\newcommand{\nats}{\mathbbm{N}}
\renewcommand{\epsilon}{\varepsilon}
\renewcommand{\phi}{\varphi}
\newcommand{\myphi}{\kl[myphi]{\phi}}
\newcommand{\mypsi}{\kl[mypsi]{\psi}}
\newcommand{\size}[1]{|#1|}
\newcommand{\pow}[1]{2^{#1}}
\newcommand{\cceq}{\mathop{::=}}
\newcommand{\set}[1]{\{#1\}}
\newcommand{\F}{\mathop{\mathbf{F}\vphantom{a}}\nolimits}
\newcommand{\G}{\mathop{\mathbf{G}\vphantom{a}}\nolimits}
\DeclareMathOperator{\U}{\mathbf{U}}
\newcommand{\X}{\mathop{\mathbf{X}\vphantom{a}}\nolimits}
\newcommand{\ltl}{{LTL}\xspace}
\newcommand{\ctlstar}{{CTL$^*$}\xspace}
\newcommand{\hyltl}{\kl[HyperLTL]{HyperLTL}\xspace}
\newcommand{\hyrechml}{{HyperRecHML}\xspace}
\newcommand{\hyqptl}{\kl[HyperQPTL]{HyperQPTL}\xspace}
\newcommand{\hyctlstar}{{HyperCTL$^*$}\xspace}
\newcommand{\qptl}{\kl[QPTL]{QPTL}\xspace}
\knowledgenewcommand{\var}{\cmdkl{\mathcal{V}}}
\newcommand{\ap}[0]{\mathrm{AP}}
\newcommand{\existsqp}{\tilde{\exists}}
\newcommand{\forallqp}{\tilde{\forall}}
\newcommand{\prophy}{\mathfrak{P}}
\newcommand\pp{\textup{P}}
\newcommand\calP{\mathcal{P}}
\newcommand{\tower}{\textsc{Tower}\xspace}
\newcommand{\myquot}[1]{``#1''}
\newcommand{\flip}[1]{\overline{#1}}
\newcommand{\tsys}{\mathfrak{T}}
\newcommand{\mytsys}{\kl[mytsys]{\mathfrak{T}}}
\knowledgenewrobustcmd{\traces}[1]{\cmdkl{\mathrm{Tr}}(#1)}
\newcommand{\Succ}[1]{\mathrm{S}(#1)}
\newcommand{\ieven}{\kl[index]{I_{\forall}}}
\newcommand{\iodd}{\kl[index]{I_{\exists}}}
\NewDocumentCommand{\tsysmanip}{O{\calP}}{%
	 \kl[system manipulation]{\tsys^{#1}}}%
\NewDocumentCommand{\phimanip}{O{\calP}O{\Xi}}{%
	 \kl[property manipulation]{\ensuremath{\phi^{#1,#2}}}}%
\newcommand{\psimanip}{%
	 \kl[psimanip]{\ensuremath{\psi^{\calP,\Xi}}}
}%
\knowledgenewrobustcmd{\prophecies}[1]{%
	 \ensuremath{\cmdkl{\mathrm{Prophecies}}(#1)}%
}%
\knowledgenewrobustcmd{\orig}[1]{%
	 \ensuremath{\cmdkl{\lambda_{\ap}}(#1)}%
}%
\knowledgenewrobustcmd{\marking}[1]{%
	 \ensuremath{\cmdkl{\mathrm{mark}}(#1)}%
}%
\newcommand{\aut}{\mathcal{A}}
\newcommand{\autb}{\mathcal{B}}
\newcommand{\initmark}{I}
\newcommand{\col}{\Omega}
\NewDocumentCommand{\auti}{mo}{%
 \IfNoValueTF{#2}%
    {\ensuremath{\kl[aut]{\mathcal{A}^{\tsys}_{#1}}}}%
    {\ensuremath{\kl[aut]{(\mathcal{A}_{#1}^{\tsys})_{#2}}}}%
}%
\knowledgenewrobustcmd{\combine}[1]{%
	 \cmdkl{\mathrm{mrg}}(#1)}%
\newcommand{\vectordots}[2]{
\begin{pmatrix}
  #1 \\
    \vspace{-.5cm}\\
  \vdots \\
    \vspace{-.5cm}\\
  #2\\
\end{pmatrix}}
\renewcommand{\vectordots}[2]{( #1 , \ldots , #2 )}
\knowledgenewrobustcmd{\equivrel}[1]{%
	 \cmdkl{\equiv_{#1}}}%
\newcommand{\tr}[1]{
\def\a{t}\def\b{#1}
\ifx\a\b{\bm\tilde{t}}\else\bm\tilde{#1}\fi
}
\newcommand{\dom}[1]{\mathrm{dom}(#1)}
\NewDocumentCommand{\safeprophy}{mmm}{%
 \ensuremath{\kl[safeprophy]{(\mathfrak{S}_{#1})_{#2}^{#3}}}%
}%
\knowledgenewrobustcmd{\progressprophy}[3]{%
 \ensuremath{\cmdkl{(\mathfrak{P}_{#1})_{#2}^{#3}}}%
}%
\knowledgenewrobustcmd{\psafe}[3]{%
 \ensuremath{\cmdkl{(\texttt{p}_{#1})_{#2}^{#3}}}%
}%
\knowledgenewrobustcmd{\pprogress}[3]{%
 \ensuremath{\cmdkl{(\texttt{p}_{#1})_{#2}^{#3}}}%
}%
\newcommand{\pformula}[3]{%
 \ensuremath{(\xi_{#1})_{#2}^{#3}}%
}%
\newcommand{\vplayer}[1]{\kl[player]{Verifier~$#1$}\xspace}
\knowledgenewrobustcmd{\game}[1]{\cmdkl{\mathcal{G}}(#1)}
\NewDocumentCommand{\round}{mo}{%
 \IfNoValueTF{#2}%
    {\kl[round]{Round~$#1$}\xspace}%
    {\kl[round]{Round~$(#1,#2)$}\xspace}%
}%
\newcommand{\hist}{\mathrm{Hist}}
\newcommand{\propo}{\texttt{p}}
\newcommand{\propovar}{\texttt{q}}
\newcommand{\merge}{^\smallfrown}
\newcommand{\gni}{\mathrm{GNI}}
\newcommand{\inp}{\mathrm{in}}
\newcommand{\out}{\mathrm{out}}
\newcommand{\inc}{\mathrm{inc}}
\newcommand{\all}{\mathrm{all}}
\newcommand{\markprop}{\texttt{m}}
\newrobustcmd{\possmoves}[2]{%
    \ensuremath{M_{#1}}(#2)}%
\knowledgenewrobustcmd{\progressmoves}[1]{%
    \ensuremath{\cmdkl{\mathrm{Prog}}(#1)}}%
\newcommand{\emptycase}{\textnormal{\textsc{Empty}}}
\newcommand{\nonemptycase}{\textnormal{\textsc{Nonempty}}}
\knowledgenewrobustcmd{\lastb}[1]{\cmdkl{\mathrm{last}_B}(#1)}
\knowledgenewrobustcmd{\firstf}[1]{\cmdkl{\mathrm{first}_F}(#1)}
\knowledgenewrobustcmd{\val}[1]{\cmdkl{\mathrm{val}}(#1)}
\knowledgenewrobustcmd{\opt}[2]{\cmdkl{\mathrm{opt}_{#1}}(#2)}
\definecolor{darkblue}{RGB}{0,0,92}
\definecolor{nicecyan}{HTML}{006165}
\definecolor{nicered}{HTML}{DB3A34}
\definecolor{nicegreen}{HTML}{6D972E}
\title{Prophecies all the Way:\\ Game-based Model-Checking for HyperQPTL beyond $\forall^*\exists^*$}
\author{Sarah Winter (IRIF, Université Paris Cité, Paris, France)\\ Martin Zimmermann (Aalborg University, Aalborg, Denmark)}
\date{}
\begin{document}

\maketitle

\begin{abstract}
    Model-checking HyperLTL, a temporal logic expressing properties of sets of traces with applications to information-flow based security and privacy, has a decidable, but TOWER-complete, model-checking problem.
While the classical model-checking algorithm for full HyperLTL is automata-theoretic, more recently, a game-based alternative for the $\forall^*\exists^*$-fragment has been presented.

Here, we employ imperfect information-games to extend the game-based approach to full HyperQPTL, which features arbitrary quantifier prefixes and quantification over propositions and can express every $\omega$-regular hyperproperty.
As a byproduct of our game-based algorithm, we obtain finite-state implementations of Skolem functions via transducers with lookahead that explain satisfaction or violation of HyperQPTL properties.
\end{abstract}

%%%%%%%%%%%%%%%%%%%%%%%%%%%%%%%%%%%%%%%%%%%%%%%%%%%%%
%%%%%%%%%%%%%%%%%%%%%%%%%%%%%%%%%%%%%%%%%%%%%%%%%%%%%
%%%%%%%%%%%%%%%%%%%%%%%%%%%%%%%%%%%%%%%%%%%%%%%%%%%%%
\section{Introduction}

Hyperlogics like \hyltl and \hyctlstar~\cite{ClarksonFKMRS14} extend their classical counterparts \ltl~\cite{Pnueli77} and \ctlstar~\cite{EmersonH86} by trace quantification and are thereby able to express so-called hyperproperties~\cite{ClarksonS10}, properties which relate multiple execution traces of a system. 
These are crucial for expressing information-flow properties capturing security and privacy requirements~\cite{ClarksonFKMRS14}.
For example, generalized noninterference~\cite{gni}  is captured by the \hyltl formula
\[
\phi_\gni = \forall \pi.\ \forall \pi'.\ \exists \pi''.\ \G \left(\bigwedge\nolimits_{\propo \in L_{\inp} \cup L_{\out}} \propo_\pi \leftrightarrow \propo_{\pi''}\right) \wedge \G\left( \bigwedge\nolimits_{\propo \in H_\inp} \propo_{\pi'} \leftrightarrow \propo_{\pi''} \right)
\]
expressing that for all traces $\pi$ and $\pi'$ there exists a trace~$\pi''$ that agrees with the low-security inputs (propositions in $L_\inp$) and low-security outputs (propositions in $L_\out$) of $\pi$ and the high-security inputs (propositions in $H_\inp$) of $\pi'$. 
Intuitively, it is satisfied if every input-output behavior observable by a low-security user of a system
is compatible with any sequence of high-security inputs, i.e., the low security input-output behavior does not leak information about the high-security inputs, which should indeed be unobservable (directly and indirectly) by a low-security user.

These expressive logics offer a uniform approach to the specification, analysis, and verification of hyperproperties  with intuitive syntax and semantics, and a decidable~\cite{ClarksonFKMRS14}, albeit \tower-complete~\cite{Rabe16diss,MZ20}, model-checking problem.
The classical model-checking algorithm~\cite{ClarksonFKMRS14} is automata-based and even the case of $\forall^*\exists^*$-formulas like $\phi_\gni$ requires the complementation of $\omega$-automata, a famously challenging construction to implement.

As an alternative, it is beneficial to draw upon the deep connections between logic and games. 
Consider a \hyltl formula of the form~$\forall \pi.\ \exists \pi'.\psi$ with quantifier-free $\psi$. 
It is straightforward to capture the semantics of \hyltl by a model-checking game between two players called Verifier (trying to prove $\tsys \models \phi$) and Falsifier (trying to prove $\tsys \not\models \phi$).
In the model-checking game, Falsifier picks a trace of $\tsys$ to interpret $\pi$ and then Verifier picks a trace of $\tsys$ to interpret $\pi'$. Verifier wins if these two traces satisfy the formula~$\psi$, otherwise Falsifier wins. 
The game can easily be shown sound (if Verifier wins, then $\tsys \models\phi$) and complete (if $\tsys\models\phi$ then Verifier wins), as winning strategies for Verifier are Skolem functions for $\pi'$ and vice versa.
But the game has one major drawback: both players have in general uncountably many possible moves as they are picking traces.

To obtain a game that can be handled algorithmically, Coenen et al.\ introduced the following variation of the model-checking game~\cite{hyperliveness} (which we will call the alternating game):
Instead of picking the traces of $\tsys$ in one move, the players construct them alternatingly one vertex at a time. 
Together with a deterministic $\omega$-automaton that is equivalent to $\psi$, this yields a finite game with $\omega$-regular winning condition that is sound, but possibly incomplete ($\tsys \models \phi$ does not always imply that Verifier wins the alternating game).

The problem boils down to the informedness of Verifier: In the model-checking game, she has knowledge about the full trace assigned to $\pi$ when picking $\pi'$. 
On the other hand, in the alternating game, Verifier only has access to the first vertex of the trace assigned to $\pi$ when picking the first vertex of the trace assigned to $\pi'$, which puts her at a disadvantage.
That the alternating game can be incomplete is witnessed by the formula
\[
\phi_\inc = \forall \pi.\ \exists \pi'.\ \propo_{\pi'} \leftrightarrow \F \propo_\pi
\]
expressing that for every trace~$\pi$ in $\tsys$ there is a trace~$\pi'$ in $\tsys$ such that $\propo$ holds at the first position of $\pi'$ if and only if there is a $\propo$ somewhere in $\pi$, i.e., the first letter of $\pi'$ depends (possibly) on every letter of $\pi$.
So, by not picking a $\propo$ in the first round, Falsifier forces Verifier to make a prediction about whether Falsifier will ever pick a $\propo$ or not in the future. 
However, Falsifier can easily contradict that prediction and thereby win, e.g., for a transition system~$\tsys_\all$ having all traces over~$\propo$. 
Thus, we indeed have $\tsys_\all \models \phi$, but Verifier does not win the alternating game.

However, a single bit of lookahead allows Verifier to win the alternating game induced by $\tsys_\all$ and $\phi_\inc$, i.e., the answer to the query \myquot{will the trace picked by Falsifier contain a $\propo$}. 
If Falsifier has to provide this information with his first move, then Verifier can make her first move accordingly. 
For correctness, we  additionally have to adapt the rules of the alternating game so that Falsifier loses when contradicting his answer made during the first round, e.g., if he commits to there being no $\propo$'s in his trace, but then picking one. 

In general, it is not sufficient to ask a single query at the beginning of a play, but one needs to ask queries at every move of Falsifier.
To see this, let us consider another example, this time with the formula
\[
\phi_\inc' = \forall \pi.\ \exists \pi'.\  \G (\propo_{\pi'} \leftrightarrow \X \propo_\pi).
\]
Here, Verifier has to always pick a $\propo$ in her move if and only if Falsifier will pick a $\propo$ in his next move (which Verifier does not have yet access to). 
Thus, Verifier wins if she gets the (truthful) answer to the query \myquot{does the next move by Falsifier contain a $\propo$}, but loses without the ability to query Falsifier.
Note that in both cases, the queries are used to obtain a (binding) commitment about the future moves that Falsifier will make.

Coenen et al.~\cite{hyperliveness} and Beutner and Finkbeiner~\cite{BF} showed how to formalize this intuition using so-called prophecies~\cite{DBLP:journals/tcs/AbadiL91}.
A single prophecy is a language~$P$ of traces and is associated with a (Boolean) prophecy variable. 
Now, in addition to picking, vertex by vertex, a trace of $\tsys$, Falsifier also picks a truth value for the prophecy variable with the interpretation that a value of $1$ corresponds to the suffix of the trace picked by him from now on being in $P$ and that a value of of $0$ corresponds to the suffix of the trace picked by him from now on not being in $P$.
If the language~$P$ is $\omega$-regular, then one can employ an $\omega$-automaton to check whether the predictions made by Falsifier are actually truthful (and make him lose if they are not truthful).
We call the resulting game the alternating game with prophecies.
Intuitively, it constitutes a middle-ground between the naive game where both players pick traces and the alternating game (without prophecies). 
The former is sound and complete, but not finite-state while the latter is sound and finite-state, but not complete.

The alternating game with (finitely many) $\omega$-regular prophecies is sound and finite-state and Beutner and Finkbeiner showed that for every $\forall^*\exists^*$-formula there is a finite and effectively computable set of $\omega$-regular prophecies such that Verifier wins the alternating game with these prophecies if and only if $\tsys \models \phi$, i.e., there are always prophecies that make the alternating game-based approach to model-checking also complete.
Furthermore, the resulting game with prophecies has a $\omega$-regular winning condition, and can therefore be solved effectively.
Note that this construction is still automata-based, as the prophecies are derived from a (deterministic) $\omega$-automaton for the quantifier-free part of $\phi$.
Beutner and Finkbeiner implemented their prophecy-based model-checking algorithm and presented encouraging results on small instances, e.g., their prototype implementation is in some cases the first tool that can prove $\tsys\not\models\phi$, a result that was out of reach for existing model-checking tools~\cite{BF}. These results shows the potential of the game-based approach to \hyltl model-checking.

However, one question remains open: can the alternating game-based approach be extended to full \hyltl, i.e., beyond a single quantifier alternation.
There is precedent for such a game-based analysis of full \hyltl: 
Recently, Winter and Zimmermann showed that the existence of (Turing) computable Skolem functions for existentially quantified variables can be characterized by a game~\cite{WinterZimmermann}. 
For example, $\tsys_\all \models \phi_\inc'$ is witnessed by computable Skolem functions, as reading a prefix of length~$n$ of $\pi$ allows to compute the prefix of length~$n-1$ of $\pi'$ so that for every $\pi$, $\pi$ and the resulting $\pi'$ satisfy $\G (\propo_{\pi'} \leftrightarrow \X \propo_\pi)$.
On the other hand, $\tsys_\all \models \phi_\inc$ is not witnessed by computable Skolem functions, as the choice of the first letter of $\pi'$ depends, as explained above, on all letters of $\pi'$.
Such a Skolem function is not computable by a Turing machine, as it is not continuous.  

The game characterizing the existence of computable Skolem functions, say for a formula
\[
\phi = \forall \pi_0.\ \exists \pi_1.\ \ldots \forall \pi_{k-2}.\ \exists \pi_{k-1}.\ \psi
\]
with quantifier-free $\psi$, is a multi-player game played between a player in charge of selecting, vertex by vertex, a trace for each universally quantified variable (i.e., he has the role that Falsifier has in the previous games), and a coalition of players, one for each existentially quantified variable (i.e., the coalition has the role that Verifier has in the previous games).
Furthermore, the game must be of imperfect information in order to capture the semantics of \hyltl, where the choice of $\pi_i$ may only depend on the choice of the $\pi_j$ with $j < i$.
Thus, in the game, the player in charge of an existentially quantified $\pi_i$ only has access to the choices made so far for the $\pi_j$ for $j < i$. 
Furthermore, the game needs to incorporate a delay~\cite{HL72, KleinZimmermann,FWjournal} between the moves of the different players to capture the fact that a choice by one of the existential players may depend on future moves by the universal player (see, e.g., the formula~$\phi_\inc'$ above).
The main insight then is that a bounded delay is always sufficient, if there are computable Skolem functions at all (see, again, the difference between $\phi_\inc$ (which has no computable Skolem functions over $\tsys_\all$) and $\phi_\inc'$ (which has computable Skolem functions over $\tsys_\all$)).

\subparagraph*{Our Contribution.}
We present the first effective game-based characterization of model-checking for full \hyltl (and even \hyqptl, which allows to express all $\omega$-regular hyperproperties~\cite{Rabe16diss,FinkbeinerHHT20}), yielding a sound and complete imperfect information finite-state game with $\omega$-regular winning condition. 
This result generalizes both the alternating game with prophecies from $\forall^*\exists^*$-formulas to formulas with arbitrary quantifier prefixes and the game of Winter and Zimmermann from characterizing the existence of computable Skolem functions witnessing $\tsys \models\phi$ to characterizing $\tsys\models\phi$.

However, since $\tsys \models \phi_\inc$ holds, but does not have computable Skolem functions, the games of Winter and Zimmermann are \emph{not} a special case of the games we construct here:
Our games here are still multi-player games of imperfect information (to capture the semantics of quantification) and use prophecies (for completeness), but do not require delayed moves, as prophecies can be seen as a (restricted) form of infinite lookahead. And while the existence of computable Skolem functions is concerned with bounded lookahead, here we do indeed need infinite lookahead as witnessed by the formula~$\phi_\inc$.

Our main result shows that there is again a finite and effectively computable set of $\omega$-regular prophecies so that the coalition of players in charge of the existentially quantified variables has a winning strategy in the game with these prophecies if and only if $\tsys \models\phi$.
One challenge to overcome here is a careful definition of the prophecies, so that they are not leaking any information about choices for variables~$\pi_j$ that a player in charge of $\pi_i$ with $i < j$ must not have access to. 

Our result can also be framed in terms of Skolem function implementable by letter-to-letter transducers with ($\omega$-regular) lookahead: such a transducer computes a Skolem function for an existentially quantified variable~$\pi$ while reading values for the variables universally quantified before $\pi$ while also being able to get a regular lookahead on the trace for those universally quantified variable (much like prophecies).
The use of regular lookahead is well-studied in automata theory, see e.g.,~\cite{eilenberg1974automata,engelfriet1976top,DBLP:conf/lics/AlurFT12}.

%%%%%%%%%%%%%%%%%%%%%%%%%%%%%%%%%%%%%%%
%%%%%%%%%%%%%%%%%%%%%%%%%%%%%%%%%%%%%%%
%%%%%%%%%%%%%%%%%%%%%%%%%%%%%%%%%%%%%%%
\section{Preliminaries}

For convenience, technical terms and notations in the electronic version of this manuscript are hyper-linked to their definitions (cf.~\url{https://ctan.org/pkg/knowledge}).

Hereafter, we denote the set of nonnegative integers by $\nats$. 

%The domain of a partial function~$f \colon A \rightarrow B$ is denoted by $\dom{f} = \set{a \in A \mid f(a) \text{ is defined}}$.
%More generally, we denote the domain~$\set{a \in A \mid (a,b)\in R \text{ for some } b\in B}$ of a relation~$R \subseteq A \times B$ by $\dom{R}$.

%%%%%%%%%%%%%%%%%%%%%%%%%%%%%%%%%%%%%%%
%%%%%%%%%%%%%%%%%%%%%%%%%%%%%%%%%%%%%%%
%%%%%%%%%%%%%%%%%%%%%%%%%%%%%%%%%%%%%%%
\subparagraph*{Traces, Transition Systems, and Automata.}
An alphabet is a nonempty finite set. 
The sets of finite and infinite words over an alphabet~$\Sigma$ are denoted by $\Sigma^*$ and $\Sigma^\omega$, respectively. The length of finite or infinite word~$w$ is denoted by $\size{w} \in \nats \cup \set{\infty}$. 
For a word~$w$ of length at least $n$, we write $w[0,n)$ for the prefix of $w$ of length~$n$.
\AP Given $n$ infinite words~$w_0,\ldots, w_{n-1}$, let their \intro{merge} (also known as zip), which is an infinite word over $\Sigma^n$, be defined as
\[
\intro*{\combine{w_0, \ldots, w_{n-1}}}  = \vectordots{w_0(0)}{w_{n-1}(0)}\vectordots{w_0(1)}{w_{n-1}(1)}\vectordots{w_0(2)}{w_{n-1}(2)}\cdots .
\]
We define $\reintro*{\combine{w_0, \ldots, w_{n-1}}}$ for finite words~$w_0, \ldots, w_{n-1}$ of the same length analogously.

Let $\ap$ be a nonempty finite set of atomic propositions. 
\AP A \intro{trace} over $\ap$ is an infinite word over the alphabet~$\pow{\ap}$.
Given a subset~$\ap' \subseteq \ap$, the $\ap'$-projection of a trace~$t(0)t(1)t(2) \cdots$ over $\ap$ is the trace~$(t(0) \cap \ap')(t(1) \cap \ap')(t(2) \cap \ap') \cdots \in (\pow{\ap'})^\omega$.
Now, let $\ap$ and $\ap'$ be two disjoint sets, let $t$ be a "trace" over~$\ap$, and let $t'$ be a "trace" over $\ap'$. 
Then, we define $t \merge t'$ as the pointwise union of $t$ and $t'$, i.e., $t \merge t'$ is the "trace" over $\ap \cup \ap'$ defined as $(t(0) \cup t'(0))(t(1) \cup t'(1))(t(2) \cup t'(2))\cdots$.

\AP A \intro{transition system}~$\tsys = (V,E,V_\initmark, \lambda)$ consists of a finite set~$V$ of vertices, a set~$E \subseteq V \times V$ of (directed) edges, a nonempty set~$V_\initmark \subseteq V$ of initial vertices, and a labelling~$\lambda\colon V \rightarrow \pow{\ap}$ of the vertices by sets of atomic propositions.
We assume that every vertex has at least one outgoing edge.
For $v \in V$, we denote by $\Succ{v}$ the set of its successors.
A \intro{path}~$\rho$ through~$\tsys$ is an infinite sequence~$\rho = v_0v_1v_2\cdots$ of vertices with  $v_0 \in V_\initmark$ and $(v_n,v_{n+1})\in E$ for every $n \ge 0$.
\AP The \reintro[trace]{trace of $\rho$} is defined as $ \lambda(\rho ) = \lambda(v_0)\lambda(v_1)\lambda(v_2)\cdots \in (\pow{\ap})^\omega$.
\AP The \intro{set of traces} of $\tsys$ is $\intro*{\traces{\tsys}} = \set{\lambda(\rho) \mid \rho \text{ is a "path" of $\tsys$}}$.
For $V' \subseteq V$, we write $\tsys_{V'}$ to denote the transition system~$(V,E,V', \lambda)$ obtained from $\tsys$ by making $V'$ the set of initial states, and use $\tsys_v$ as shorthand for $\tsys_{\set{v}}$ for $v \in V$.

\AP A (deterministic) \intro{parity automaton}\footnote{Note that we use "parity" acceptance, as we need deterministic automata for our proofs.}~$\aut = (Q, \Sigma, q_\initmark, \delta, \col)$ consists of a finite set~$Q$ of states containing the initial state~$q_\initmark \in Q$, an alphabet~$\Sigma$, a transition function~$\delta \colon Q \times \Sigma \rightarrow Q$, and a coloring~$\col\colon Q\rightarrow \nats$ of its states by natural numbers.
Let $w = w(0) w(1) w(2) \cdots \in \Sigma^\omega$.
The run of $\aut$ on $w$ is the sequence~$q_0 q_1 q_2 \cdots $ with $q_0 = q_\initmark$ and $q_{n+1} = \delta(q_n, w(n))$ for all $n\ge 0$.
A run~$q_0q_1q_2 \cdots$ is ("parity") accepting if the maximal color appearing infinitely often in the sequence~$\col(q_0)\col(q_1)\col(q_2)\cdots $ is even.
The language ("parity") recognized by $\aut$, denoted by $L(\aut)$, is the set of infinite words over $\Sigma$ such that the run of $\aut$ on $w$ is accepting.

\begin{remark}
	Deterministic "parity automata" accept exactly the $\omega$-regular languages (see, e.g.,~\cite{gtw02} for definitions).
\end{remark}

%%%%%%%%%%%%%%%%%%%%%%%%%%%%%%%%%%%%%%%
%%%%%%%%%%%%%%%%%%%%%%%%%%%%%%%%%%%%%%%
%%%%%%%%%%%%%%%%%%%%%%%%%%%%%%%%%%%%%%%
\subparagraph*{HyperQPTL, HyperLTL and QPTL.}
\label{subsec_hyperltl}

\AP Let $\intro*{\var}$ be a countable set of \intro{trace variables}. 
The formulas of \intro*{\hyqptl} are given by the grammar
\begin{align*}
    \phi \cceq {} \exists \pi.\ \phi \mid \forall \pi.\ \phi \mid \psi 
    \qquad \psi \cceq {} \existsqp \propovar.\ \psi \mid \forallqp \propovar.\ \psi \mid \psi \mid \propo_\pi \mid \propovar \mid \lnot \psi \mid \psi \lor \psi \mid \X \psi \mid \psi \U \psi
\end{align*}
where $\propo$ and $\propovar$ range over $\ap$ and where $\pi$ ranges over $\var$.
Here, we use a tilde to decorate propositional quantifiers to distinguish them from trace quantifiers.

Note that there are two types of atomic formulas, i.e., propositions labeled by "trace variables" on which they are evaluated ($\propo_\pi$ with $\propo \in \ap$ and $\pi \in \var$) and unlabeled propositions ($\propovar \in \ap$).\footnote{We use different letters in these cases, but let us stress again that both $\propo$ and $\propovar$ are propositions in $\ap$.}
A formula is a sentence, if every occurrence of an atomic formula~$\propo_\pi$ is in the scope of a quantifier binding~$\pi$ (otherwise $\pi$ is said to be a free trace variable) and every occurrence of an atomic formula~$\propovar$ is in the scope of a quantifier binding $\propovar$ (otherwise $\propovar$ is said to be a free propositional variable).
Note that the proposition~$\propo$ in an atomic formula~$\propo_\pi$ is not considered free.
Finally, we use the usual syntactic sugar like conjunction~($\wedge$), implication~($\rightarrow$), equivalence~($\leftrightarrow$), eventually~($\F$), and always~($\G$).

\AP A (trace) \intro{variable assignment} is a partial mapping~$\Pi\colon \var \rightarrow (\pow{\ap})^\omega$.
Given a "variable"~$\pi \in \var$, and a "trace"~$t$, we denote by $\Pi[\pi \mapsto t]$ the assignment that coincides with $\Pi$ on all "variables" but $\pi$, which is mapped to $t$. 
Let $\propovar \in \ap$, let $t \in (\pow{\ap})^\omega$ be a "trace" over $\ap$, and let $t_\propovar \in (\pow{\set{\propovar}})^\omega$ be a "trace" over~$\set{\propovar}$.
We define the trace~$t[\propovar\mapsto t_\propovar] = t'{} \merge t_\propovar$, where $t'$ is the $(\ap\setminus\set{\propovar})$-projection of $t$: Intuitively, the occurrences of $\propovar$ in $t$ are replaced according to $t_\propovar$.
We lift this to sets~$T$ of "traces" by defining $T[\propovar\mapsto t_\propovar] = \set{t[\propovar\mapsto t_\propovar] \mid t \in T}$. 
Note that all traces in $T[\propovar\mapsto t_\propovar]$ have the same $\set{\propovar}$-projection, which is $t_\propovar$.

Now, for a "trace assignment"~$\Pi$, a position~$i \in \nats$, and a nonempty set~$T$ of "traces", i.e., we disregard the empty set of traces as model, we define
\begin{itemize}
    
    \item $T, \Pi, i \models \propo_\pi $ if  $\propo\in\Pi(\pi)(i)$, 
    
    \item $T, \Pi, i \models \propovar $ if for all $ t \in T$ we have $\propovar\in t(i)$,
    
    \item $T, \Pi, i \models \lnot \psi $ if  $T, \Pi, i \not\models \psi$, 
    
    \item $T, \Pi, i \models \psi_1 \lor \psi_2 $ if  $T, \Pi, i \models\psi_1$ or $T, \Pi, i \models\psi_2$, 
    
    \item $T, \Pi, i \models \X \psi $ if  $T, \Pi, i+1 \models\psi$, 
    
    \item $T, \Pi, i \models \psi_1 \U \psi_2 $ if  there is a $j \ge i$ such that $T, \Pi, j \models\psi_2$ and $T, \Pi, j' \models\psi_1$ for all $i \le j' < j$, 
        
    \item $T, \Pi, i \models \existsqp \propovar.\ \psi $ if  there exists a "trace"~$t_\propovar \in (\pow{\set{\propovar}})^\omega$ such that $ T[\propovar\mapsto t_\propovar], \Pi, i \models\psi$,   

    \item $T, \Pi, i \models \forallqp \propovar.\ \psi $ if for all "traces"~$t_\propovar \in (\pow{\{\propovar\}})^\omega$ we have $T[\propovar\mapsto t_\propovar], \Pi, i \models \psi$,
    
    \item $T, \Pi, i \models \exists \pi.\, \phi $ if  there exists a "trace"~$t \in T$ such that $T, \Pi[\pi\mapsto t], i \models \phi$, and
     
    \item $T, \Pi, i \models \forall \pi.\ \phi $ if for all "traces"~$t \in T$ we have $T, \Pi[\pi\mapsto t], i \models \phi$.

\end{itemize}

We say that an (again nonempty) set~$T$ of "traces" satisfies a sentence~$\phi$, written $T \models \phi$, if $T, \Pi_\emptyset, 0 \models \phi$ where $\Pi_\emptyset$ is the "variable assignment" with empty domain. 
We then also say that $T$ is a model of $\phi$.
A "transition system"~$\tsys$ satisfies $\phi$, written $\tsys\models\phi$, if $\traces{\tsys} \models \phi$.
Let  $\psi$ be a trace quantifier-free formula that has no free (unlabeled) propositions.
A labeled proposition of the form~$\propo_\pi$ in $\psi$ is evaluated on the traces assigned to $\pi$ and an unlabeled proposition of the form~$\propovar$ is evaluated on a trace \myquot{selected} by the quantifier binding~$\propovar$.
Hence, $T, \Pi, i \models\psi $ is independent of $T$ and we often write $\Pi \models \psi$ instead of $T, \Pi, 0 \models \psi$.

Some comments on the definition of \hyqptl are due.
\begin{remark}
We have, for technical reasons, defined \hyqptl so that every formula has a prefix of trace quantifiers followed by a formula (possibly) containing propositional quantifiers, but no more trace quantifiers.
    On the other hand, Finkbeiner et al.~\cite{FinkbeinerHHT20} require formulas to be in prenex normal form, but allow to mix trace and propositional quantification in the quantifier prefix.
    We refrained from doing so, as one can duplicate the vertices of the transition system one is interested in, so that it has enough "paths" to simulate propositional quantification by trace quantification, and can adapt the formula correspondingly. 
    This construction has a linear blowup.
\end{remark}
\begin{remark}
\AP \intro*{\hyltl} is the fragment of \hyqptl sentences that do not use the propositional quantifiers~$\existsqp$ and $\forallqp$ (and thus also not use unlabeled propositions).
\end{remark}

\AP We say that a \hyqptl formula~$\phi$ is a \intro*{\qptl} sentence if it has no trace quantifiers and no free propositional variables, i.e., all unlabeled propositional variables are in the scope of a propositional quantifier.
But a \qptl sentence may have free "trace variables", say $\pi_0, \ldots, \pi_{k-1}$ for some $k \ge 0$.
Typically, \qptl is defined without "trace variables" labelling propositions and \qptl formulas define languages over the alphabet~$\pow{\ap}$~\cite{qptl}. 
For technical necessity, we allow such labels, which implies that our formulas define languages over the alphabet~$(\pow{\ap})^k$, where $k$ is the number of "trace variables" occurring in the formula. 
More formally, a \qptl sentence with "trace variables"~$\pi_0, \pi_1, \ldots, \pi_{k-1}$ defines the language
\[
L(\phi) = \set{
\combine{t_0, \ldots, t_{k-1}} \mid t_0, \ldots, t_{k-1} \in (\pow{\ap})^\omega :
\Pi_\emptyset[\pi_0 \mapsto t_0, \ldots, \pi_{k-1} \mapsto t_{k-1}] \models \phi
}.
\]

Now, let 
$\phi = Q_0\pi_0 Q_1\pi_1 \cdots Q_{k-1}\pi_{k-1}.\ \psi$ with $Q_i \in \set{\exists,\forall}$ and trace quantifier-free $\psi$ be a \hyqptl sentence and define
$
\phi_i = Q_{i+1}\pi_{i+1} Q_{i+2}\pi_{i+2} \cdots Q_{k-1}\pi_{k-1}.\ \psi
$
for $i \in \set{0,1,\ldots,k-1}$ and $\phi_{-1} = \phi$.
Note that $\phi_{k-1} = \psi$ and that the free "trace variables" of each $\phi_i$ with $i \in \set{-1,0,\ldots,k-1}$ are exactly $\pi_0, \ldots, \pi_{i}$.
The following results follows by combining and adapting automata constructions for classical \qptl and \hyltl~\cite{ClarksonFKMRS14,FinkbeinerRS15,qptl}.
\begin{proposition}
\label{prop_automataconstruction}
\begin{enumerate}
    \item Let $\tsys$ be a "transition system".
For every $i \in \set{-1,0,\ldots,k-1}$ there is an (effectively constructible) "parity automaton"~$\aut_i^\tsys$ such that
\[
L(\aut_i^\tsys) = \{\combine{\Pi(\pi_0), \ldots, \Pi(\pi_{i})} \mid \Pi(\pi_j) \in \traces{\tsys} \text{ for $0 \leq j \leq i$ and } \traces{\tsys}, \Pi, 0 \models \phi_{i}\}.
\]

\item A language over $(\pow{\ap})^{k}$ is $\omega$-regular if and only if it is of the form~$L(\phi)$ for a \qptl-sentence~$\phi$ over some $\ap' \supseteq \ap$ with $k$ free "trace variables".\footnote{For the sake of readability, in the following, we do not distinguish between $\ap$ and $\ap'$, i.e., we assume that $\ap$ always contains enough propositions not used in our languages to quantify over.}
\end{enumerate}
\end{proposition}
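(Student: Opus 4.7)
The strategy is to first establish part~2 and then use it as the base case of a downward induction on $i$ for part~1. Both parts rest on standard closure properties of deterministic "parity automata": existential projection (which yields a nondeterministic automaton), determinization (Safra--Piterman), and complementation (trivial for deterministic parity automata by a uniform shift of all colors).

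For part~2, the direction from \qptl to $\omega$-regular proceeds by structural induction on $\phi$. Boolean and temporal operators are handled by the classical translation of \ltl into nondeterministic Büchi automata over the alphabet $(\pow{\ap})^{k}$, where an atomic $\propo_\pi$ becomes a query on the component indexed by $\pi$; an existential propositional quantifier $\existsqp \propovar . \psi$ becomes existential projection of the $\propovar$-component of the alphabet, and $\forallqp$ is handled via $\lnot \existsqp \lnot$, using determinization to complement. For the converse direction, given a nondeterministic Büchi automaton $\autb=(Q,\Sigma,q_\initmark,\delta,F)$ with $\Sigma = (\pow{\ap})^{k}$, we introduce $\lceil \log \size{Q} \rceil$ fresh propositions under $\existsqp$-quantification to encode a run and assert in \ltl both that the encoded sequence is a valid transition sequence of $\autb$ compatible with the input and that an accepting state occurs infinitely often; this is the classical expressive completeness argument for \qptl.

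For part~1, the base case $i = k-1$ is $\phi_{k-1} = \psi$, a \qptl-sentence with free "trace variables" $\pi_0, \ldots, \pi_{k-1}$. Part~2 yields a deterministic "parity automaton" for $L(\psi)$, which we intersect with the deterministic safety automaton obtained from the $k$-fold self-product of $\tsys$ (viewed as a safety automaton over $\pow{\ap}$) to enforce that each component of the input is a "trace" of $\tsys$. For the inductive step, given $\aut_i^\tsys$ we construct $\aut_{i-1}^\tsys$ depending on $Q_i$. If $Q_i = \exists$, we existentially project the $i$-th alphabet component of $\aut_i^\tsys$ and determinize; the constraint $\Pi(\pi_i) \in \traces{\tsys}$ is already encoded in $\aut_i^\tsys$, so projection automatically restricts to witnesses from $\traces{\tsys}$. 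If $Q_i = \forall$, we use $\forall = \lnot \exists \lnot$: complement $\aut_i^\tsys$, intersect with the safety automaton ensuring that the $i$-th component lies in $\traces{\tsys}$, existentially project the $i$-th component, determinize, complement again, and finally intersect with the safety automaton for the remaining $i$ components.

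The main technical obstacle is the careful bookkeeping of which constraints each intermediate automaton enforces: because $\aut_i^\tsys$ bakes in the conjunction that all of its $i+1$ components are "traces" of $\tsys$, complementing it in the $\forall$-case also accepts merges with out-of-system $t_i$, which would spuriously witness a failure of the universal quantifier; hence the safety intersection before projection is essential. Apart from this tracking, the construction is an iterated application of standard $\omega$-automata operations, whose nested determinizations account for the \tower-complete complexity of \hyqptl model-checking.
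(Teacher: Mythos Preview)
Your proposal is correct and, in fact, goes considerably further than the paper does: the paper provides no proof of this proposition at all, merely remarking that the result ``follows by combining and adapting automata constructions for classical \qptl and \hyltl'' with citations to~\cite{ClarksonFKMRS14,FinkbeinerRS15,qptl}. Your sketch is precisely the standard construction those references contain, carried out in the present setting with the additional trace-membership constraint.

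One small point on the $\forall$-case bookkeeping: your final intersection with the safety automaton for the remaining $i$ components is actually redundant. After complementing $\aut_i^\tsys$, intersecting with ``$i$-th component in $\traces{\tsys}$'', projecting, and complementing again, any tuple with an out-of-system component among $t_0,\ldots,t_{i-1}$ is already rejected (because before the second complement it was accepted, witnessed by any $t_i\in\traces{\tsys}$, which exists as $\traces{\tsys}\neq\emptyset$). The extra intersection is harmless, but you could drop it and note why. Otherwise your treatment of the subtle point---that complementation negates the conjunction of trace-membership and satisfaction, not satisfaction alone---is exactly the care the construction needs.
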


%%%%%%%%%%%%%%%%%%%%%%%%%%%%%%%%%%%%%%%%%%%%%%%%%%%%%
%%%%%%%%%%%%%%%%%%%%%%%%%%%%%%%%%%%%%%%%%%%%%%%%%%%%%
%%%%%%%%%%%%%%%%%%%%%%%%%%%%%%%%%%%%%%%%%%%%%%%%%%%%%
\subparagraph*{Skolem Functions for HyperQPTL.}

\AP Let $\varphi = Q_0 \pi_0 \cdots Q_{k-1} \pi_{k-1}.\ \psi $ be a \hyqptl sentence such that $\psi$ is trace quantifier-free and let $T$ be a nonempty set of "traces". 
Moreover, let $i \in \set{0,1,\ldots, k-1}$ be such that $Q_i = \exists$ and let $U_i = \set{j < i \mid Q_{j} = \forall}$ be the indices of the universal quantifiers preceding $Q_i$.
Furthermore, let $f_i \colon T^{\size{U_i}} \rightarrow T$ for each such $i$ 
(note that $f_i$ is a constant, if $U_i$ is empty).
We say that a "trace assignment"~$\Pi$ with $\dom{\Pi} \supseteq \set{\pi_0, \pi_1, \ldots, \pi_{k-1}}$ is consistent with the $f_i$ if
$\Pi(\pi_i) \in T$ for all $i$ with $Q_i = \forall$ and $\Pi(\pi_i) = f_i(\Pi(\pi_{i_0}), \Pi(\pi_{i_1}), \ldots, \Pi(\pi_{i_{{\size{U_i}-1}}}))$ for all $i$ with $Q_i = \exists$, where $U_i = \set{i_0 < i_1 < \cdots < i_{\size{U_i}-1}}$.
If $\Pi \models \psi$ for each $\Pi$ that is consistent with the $f_i$, then we say that the $f_i$ are \intro{Skolem functions} witnessing $T \models \varphi$.

\begin{remark}
$T \models \varphi$ iff there are "Skolem functions" for the existentially quantified "variables" of $\varphi$ that witness $T \models \varphi$.
\end{remark}

Note that only "traces" for universal "variables" are inputs for "Skolem functions", but not those for existentially quantified "variables". 
As usual, this is not a restriction, as the inputs of a "Skolem function" for an existentially quantified "variable"~$\pi_i$ are a superset of the inputs of a "Skolem function" for another existentially quantified "variable"~$\pi_{j}$ with $j < i$.

\begin{example}[\cite{WZtracy}]
\label{example_reconstruct}
Let $\varphi = \forall \pi \exists \pi_1 \exists \pi_2.\ \G(a_\pi \leftrightarrow (a_{\pi_1} \oplus a_{\pi_2}))$.
We have $(\pow{\set{a}})^\omega \models \varphi$.
Now, for every function~$f_1 \colon (\pow{\set{a}})^\omega \rightarrow (\pow{\set{a}})^\omega$, there is a function~$f_2 \colon (\pow{\set{a}})^\omega \rightarrow (\pow{\set{a}})^\omega$ such that $f_1,f_2$ are "Skolem functions" witnessing $(\pow{\set{a}})^\omega \models \varphi$, i.e., we need to define $f_2$ such that
$(f_2(t))(n) = (f_1(t))(n)$
for all $n \in\nats$ such that $t(n) = \emptyset$ and
$(f_2(t))(n) = \flip{(f_1(t))(n)} $
for all $n \in\nats$ such that $t(n) = \set{a}$, where $\flip{\set{a}} = \emptyset$ and $\flip{\emptyset} = \set{a}$.
Hence, $f_2$ depends on $f_1$, but the value of $f_1(t)$ (for the existentially quantified~$\pi_1$) does not need to be an input to $f_2$, it can be determined from the input~$t$ for the universally quantified $\pi$. 
This is not surprising, but needs to be taken into account in our constructions.
\end{example}

%%%%%%%%%%%%%%%%%%%%%%%%%%%%%%%%%%%%%%%
%%%%%%%%%%%%%%%%%%%%%%%%%%%%%%%%%%%%%%%
%%%%%%%%%%%%%%%%%%%%%%%%%%%%%%%%%%%%%%%
\section{Gamed-based Model-Checking for HyperQPTL}
\label{sec_gamebasedmc}

In this section, we present our game-based characterization of $\tsys \models\phi$ for finite transition systems~$\tsys$ and \hyqptl sentences~$\phi$.
To this end, we introduce multi-player games with hierarchical information in \cref{subsec_prelimsGame} and then present the construction of our game in \cref{subsec_game}, while we introduce "prophecies" and prove their soundness in \cref{subsec_prophecies}.
Then, in \cref{sec_safety}, we show how to construct complete "prophecies" for the special case of "safety" formulas (to be defined formally there).
Finally, in \cref{sec_rabin}, we show how to construct complete "prophecies" for arbitrary formulas.
This allows us to first explain how to generalize the prophecy definition from $\forall^*\exists^*$ to arbitrary quantifier prefixes and then, in a second step, move from "safety" to $\omega$-regular languages. 

%%%%%%%%%%%%%%%%%%%%%%%%%%%%%%%%%%%%%%%
%%%%%%%%%%%%%%%%%%%%%%%%%%%%%%%%%%%%%%%
%%%%%%%%%%%%%%%%%%%%%%%%%%%%%%%%%%%%%%%
\subsection{Multi-player Graph Games with Hierarchical Information}
\label{subsec_prelimsGame}

We develop a game-based characterization of model-checking for \hyqptl via (multi-player) graph games with hierarchical information, using the notations of Berwanger et al.~\cite{bbb}.
First, we introduce the necessary definitions and then present our game in \cref{subsec_game}.
The games considered by Berwanger et al.\ are concurrent games (i.e., the players make their moves simultaneously), while for our purpose turn-based games (i.e., the players make their moves one after the other) are sufficient.
Turn-based games are simpler versions of concurrent games.
To avoid cumbersome notation, we introduce a turn-based variant of these games.

%\subparagraph*{Game Graphs.} 
Fix some finite set~$C$ of players forming a coalition playing against a distinguished agent called Nature (which is \emph{not} in $C$). 
For each player~$i \in C$ we fix a finite set~$B^i$ of observations.
A game graph~$G = (V, E, v_\initmark, (\beta^i)_{i\in C})$ consists of a finite set~$V = \biguplus_{i \in C} V_i \uplus V_{\text{Nat}}$ of positions partitioned into sets controlled by some player resp.\ Nature, an edge relation~$E \subseteq V \times  V$ representing moves, an initial position~$v_\initmark \in V$, and a collection~$(\beta^i)_{i\in C}$ of observation functions~$\beta^i \colon V \rightarrow B^i$ that label, for each player, the positions with observations.
We require that $E$ has no dead-ends, i.e., for every $v \in V$ there is a $v' \in V$ with $(v,v') \in E$.

A game graph~$(V, E, v_\initmark, (\beta^i)_{i\in C})$ yields hierarchical information if there exists a total order~$\preceq$ over $C$ such that if $i \preceq j$ then for all $v,v' \in V$, $\beta^i(v) = \beta^i(v')$ implies $\beta^j(v) = \beta^j(v')$, i.e., if Player~$i$ cannot distinguish $v$ and $v'$, then neither can Player~$j$ for $i \preceq j$.

%\subparagraph*{Plays.} 
Intuitively, a play starts at position~$v_\initmark \in V$.
At position~$v$, the player that controls this position chooses a successor position~$v'$ such that $(v,v') \in E$. Now, each player~$i \in C$ receives the observation~$\beta^i(v')$ and the play continues from position~$v'$.
Thus, a play of $G$ is an infinite sequence~$v_0 v_1 v_2 \cdots$ of vertices such that $v_0 = v_\initmark$ and for all $r \ge 0$ we have $(v_r,v_{r+1}) \in E$.

A history is a prefix~$v_0 v_1 \cdots v_r$ of a play. 
We denote the set of all histories by $\hist(G)$ and extend $\beta^i\colon V \rightarrow B^i$ to plays and histories by defining $\beta^i(v_0 v_1 v_2 \cdots) = \beta^i(v_1)\beta^i(v_2)\beta^i(v_3)\cdots$. Note that the observation of the initial position is discarded for technical reasons~\cite{bbb}.
We say two histories $h$ and $h'$ are indistinguishable to Player~$i \in C$, denoted by $h \sim_i h'$, if $\beta^i(h) = \beta^i(h')$. 

%\subparagraph*{Strategies.} 
A strategy for Player~$i \in C$ is a mapping~$s^i \colon V^* \rightarrow V$ that satisfies $s^i(h) = s^i(h')$ for all histories~$h,h'$ with $h \sim_i h'$ (i.e., the move selected by the strategy only depends on the observations of the history).
A play~$v_0 v_1 v_2 \cdots$ is consistent with $s^i$ if for every $r \ge 0$, we have $v_{r+1} = s^i(v_0 v_1 \cdots v_r)$.
A play is consistent with a collection of strategies~$(s^i )_{i \in C}$ if it is consistent with each $s^i$.
The set of possible outcomes of a collection of strategies is the set of all plays that are consistent with it.
As usual, a strategy is finite-state, if it is implemented by some Moore machine.
%., i.e., a finite automaton with output that reads histories and outputs vertices to move to.

% \subparagraph*{Finite-state Strategies.} 
% Next, we define what it means for a strategy for Player~$i$ to be finite-state. 
% %We follow the definitions used by Berwanger et al.~\cite{bbb} and use Moore machines, i.e., finite automata with output on states.
% %
% Let $B^i$ be the observations of Player~$i$. 
% A Moore machine~$\moore = (M, m_\initmark, \upd,\nextmove)$ for Player~$i$ consists of a finite set~$M$ of memory states containing the initial memory state~$m_\initmark$, a memory update function~$\upd\colon M \times B^i \rightarrow M$, and a next-move function~$\nextmove\colon M\rightarrow V$.
% We extend~$\upd$ to words over $B^i$ by defining $\upd(\epsilon) = m_\initmark$ and $\upd(b_0 b_1 \cdots b_r) = \upd(\upd(b_0b_1 \cdots b_{r-1}),b_r)$.
% We say $\moore$ implements the strategy mapping $v_0 v_1 \cdots v_r$ to $\nextmove(\upd(\beta^i(v_0 v_1 \cdots v_r)))$. 
% A strategy is finite-state, if it is implemented by some Moore machine.

%\subparagraph*{Winning Conditions.}
Lastly, a game $\mathcal{G}$ consists of a game graph $G$ and a winning condition~$W \subseteq V^\omega$, where $V$ is the set of positions of $G$.
A play is winning if it is in $W$ and a collection of strategies is winning if all its outcomes are winning.

\begin{proposition}[\cite{DBLP:conf/focs/PnueliR90,bbb}]
\label{prop_distributedgames}
\begin{enumerate}
    \item \label{prop_distributedgames_decidability}
        It is decidable, given a game with hierarchical information with $\omega$-regular winning condition, whether it has a winning collection of strategies.
    \item \label{prop_distributedgames_finitestate}
        A game with hierarchical information with $\omega$-regular winning condition has a winning collection of strategies if and only if it has a winning collection of finite-state strategies.
\end{enumerate}
\end{proposition}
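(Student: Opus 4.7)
The plan is to reduce the problem to solving classical two-player perfect-information games with $\omega$-regular winning conditions, for which both decidability and the sufficiency of finite-state winning strategies are standard (e.g.\ via positional determinacy of parity games after product with a deterministic parity automaton for the winning condition).

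I would proceed by induction on $|C|$, the size of the coalition. Let $\preceq$ be a total order witnessing hierarchical information, and write the players as $p_1 \preceq p_2 \preceq \cdots \preceq p_n$, so that $p_1$ has the coarsest observation. The idea is to eliminate players one at a time, starting with $p_1$, via a belief-state (subset) construction, until only the two-player game against Nature remains.

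Concretely, at each step I would replace the current game graph $G$ by a new graph $G'$ whose positions encode both a position of $G$ and the current belief of $p_1$, i.e.\ the set of $G$-positions indistinguishable from the actual one under $\beta^{p_1}$. Nature's moves in $G'$ mirror those in $G$ and update the belief according to the resulting observation; positions controlled by $p_1$ in $G'$ are annotated only by a belief, and moves from them must be chosen uniformly across all positions in that belief (this is exactly what enforces $s^{p_1}(h) = s^{p_1}(h')$ for $h \sim_{p_1} h'$ in $G$). Because $\preceq$ is hierarchical, each remaining player $p_i$ with $i \ge 2$ has an observation refining $\beta^{p_1}$, so $\beta^{p_i}$ lifts unambiguously to $G'$, and the coalition $\{p_2, \ldots, p_n\}$ on $G'$ retains hierarchical information with one fewer member. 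The $\omega$-regular winning condition transfers by pulling it back along the projection $G' \to G$, which preserves $\omega$-regularity. After $n-1$ such steps, only Nature's choices remain adversarial, yielding a two-player $\omega$-regular game of perfect information against Nature, for which the classical results cited above apply.

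The main obstacle will be transferring winning collections of strategies in both directions between $G$ and $G'$ while respecting the indistinguishability constraints. In one direction, a collection for $G'$ must be unfolded into strategies on the histories of $G$ whose moves depend only on the respective observations; the uniformity requirement on $p_1$'s moves in $G'$ is precisely tailored to make this work, and the hierarchical assumption is essential here, since without it the lift of $\beta^{p_i}$ to $G'$ would not be well-defined. In the converse direction, one combines each player's observation-based strategy with the belief update to produce moves on $G'$. Once the correspondence is established, part~\ref{prop_distributedgames_decidability} follows by iterating until only a classical two-player game remains, and part~\ref{prop_distributedgames_finitestate} follows by lifting a finite-state winning collection back through each reduction step, noting that each belief-state construction has a finite state space and therefore preserves finite-stateness.
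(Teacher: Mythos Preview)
The paper does not give its own proof of this proposition: it is stated with citations to \cite{DBLP:conf/focs/PnueliR90,bbb} and used as a black box, so there is nothing in the paper to compare your argument against.

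That said, your sketch contains a genuine slip you should fix. By the paper's definition, $i \preceq j$ means that whenever Player~$i$ cannot distinguish two positions, neither can Player~$j$; hence the $\preceq$-\emph{minimal} player~$p_1$ has the \emph{finest} observation, not the coarsest. Your subsequent claim that ``each remaining player $p_i$ with $i \ge 2$ has an observation refining $\beta^{p_1}$'' is then false, and the lift of $\beta^{p_i}$ to the belief game~$G'$ is no longer well-defined: a less-informed player cannot reconstruct the belief of a more-informed one. The fix is simple --- process players in the opposite order, starting with the \emph{least} informed (i.e., $p_n$), so that all remaining players can indeed compute the eliminated player's belief from their own observation history. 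With that correction, the iterated belief/subset construction you describe is a legitimate route to both decidability and finite-state strategies, and is close in spirit to the information-tracking constructions used in the cited references (which typically phrase the argument via tree automata over strategy trees rather than purely game-theoretically, but the underlying idea is the same).
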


%%%%%%%%%%%%%%%%%%%%%%%%%%%%%%%%%%%%%%%%%%%%%%%%%%%%%
%%%%%%%%%%%%%%%%%%%%%%%%%%%%%%%%%%%%%%%%%%%%%%%%%%%%%
%%%%%%%%%%%%%%%%%%%%%%%%%%%%%%%%%%%%%%%%%%%%%%%%%%%%%
\subsection{The Model-checking Game}
\label{subsec_game}

\AP For the remainder of this paper, we fix a \hyqptl sentence~$\intro*{\myphi}$ and a "transition system"~$\intro*{\mytsys}$.
We assume (w.l.o.g.)\footnote{\label{fn_nf}The following reasoning can easily be extended to general sentences with arbitrary quantifier prefixes, albeit at the cost of more complex notation. We substantiate this claim in \cref{remark_nonstrictalternation} on \cpageref{remark_nonstrictalternation}.}
\[
\reintro[myphi]{\phi} = \reintro[myphi]{\forall \pi_0 \exists \pi_1 \cdots \forall \pi_{k-2} \exists \pi_{k-1}.\ \psi}
\]
such that $\intro*{\mypsi}$ is trace quantifier-free, define 
\[
\reintro[myphi]{\phi_i} = \reintro[myphi]{Q_{i+1} \pi_{i+1} Q_{i+2} \pi_{i+2} \cdots \forall \pi_{k-2} \exists \pi_{k-1}.\ \psi}
\]
for $i \in \set{-1, 0, \ldots, \mathmbox{k-1}}$
and use the automata~$\intro*{\auti{i}}$ constructed in \cref{prop_automataconstruction} satisfying 
\[
\AP 
 L(\auti{i}) =   \{ \combine{\Pi(\pi_0), \ldots, \Pi(\pi_{i})} \mid \Pi(\pi_j) \in \traces{\mytsys} \text{ for all $ 0 \le j \leq i$ and }\traces{\mytsys}, \Pi, 0 \models \myphi_{i} \}.
\]
We use the notation $\reintro*{\auti{i}[q]}$ to denote the "parity automaton" obtained from $\auti{i}$ by making its state~$q$ the initial state.
Finally, let $\reintro*{\auti{k-1}} = \reintro[aut]{(Q,\Sigma,q_\initmark,\delta,\col)}$. 
Note that $\auti{k-1}$ accepts the "trace assignments" coming from "paths" trough $\mytsys$ that satisfy~$\mypsi$.
We say that  $\auti{k-1}$ checks $\mypsi$.

\AP We define a multi-player game~$\intro*{\game{\mytsys, \myphi}}$ with hierarchical information induced by the "transition system" $\mytsys$ and the \hyqptl sentence~$\myphi$.
This game is played between \intro{Falsifier} (who takes on the role of Nature, cf.\ \cref{subsec_prelimsGame}), who is in charge of providing "traces" (from "paths" trough the "transition system") for the universally quantified "variables", and a coalition of \reintro{Verifier-players} $\set{1,3,\ldots,k-1}$ (\reintro*{\vplayer{i}} for short), who are in charge of providing "traces" (also from "paths" trough the "transition system") for the existentially quantified "variables".
The goal of "Falsifier" is to prove $\mytsys \not\models \myphi$ and the goal of the coalition of "Verifier-players" is to prove $\mytsys \models \myphi$.
Therefore, the $k$ "traces" built during a play are read synchronously by the "parity automaton" $\auti{k-1}$ accepting the "trace assignments" that satisfy~$\mypsi$ (recall that $\mypsi$ is the trace quantifier-free part of $\myphi$).

The game has two phases,
an initialization phase where initial vertices for all "paths" through the "transition system" are picked, and a second phase (of infinite duration) where the "paths" (which induce the "trace assignment") are built. 
Formally, in the initialization phase, a position of the game is of the form $((v_0,\ldots,v_{i-1},\underbrace{\bullet,\ldots,\bullet}_{k-i \text{ times}}),q_\initmark,i)$ where $v_0,\ldots,v_{i-1} \in V_\initmark$ are initial vertices in the "transition system"~$\mytsys$, $\bullet$ is a fresh (placeholder) symbol, $q_\initmark \in Q$ is the initial state of $\auti{k-1}$, and $i \in \set{0,1,\cdots,\mathmbox{k-1}}$.
In the second phase, a position of the game is of the form $((v_0,\ldots,v_{k-1}),q,i)$ where $v_0,v_1,\ldots,v_{k-1} \in V$, $q \in Q$, and $i \in \set{0,1,\ldots,\mathmbox{k-1}}$.
A vertex whose last component is an odd~$i$ is controlled by \vplayer{i} and a vertex whose last component is even is controlled by "Falsifier".

The edges (also called moves) of the game graph are defined as follows, where the first two items are the moves in the initialization phase:
\begin{itemize}

    \item $\bigl(((v_0,\ldots,v_{i-1},\bullet,\ldots,
    \bullet),q_\initmark,i),((v_0,\ldots,v_{i-1},v_i,\bullet,\ldots,
    \bullet),q_\initmark,i+1)\bigr)$ for all $v_i \in V_\initmark$ and all $i \in \set{0,1,\ldots,\mathmbox{k-2}}$: An initial vertex in $\mytsys$ for the $i$-th "path" is picked.

    \item $\bigl(((v_0,\ldots,v_{k-2},\bullet),q_\initmark,k-1),((v_0,\ldots,v_{k-2},v_{k-1}),q,0)\bigr)$ for all $v_{k-1} \in V_\initmark$ where \[q = \delta(q_\initmark,\combine{\lambda(v_0),\ldots,\lambda(v_{k-1})}):\] An initial vertex in $\mytsys$ for the last "path" is picked.
    With this move, the initialization phase is over and the state of $\auti{k-1}$ checking $\mypsi$ is updated for the first time.

    \item $\bigl(((v'_0,\ldots,v'_{i-1},v_i,v_{i+1},\ldots,
    v_{k-1}),q,i),((v'_0,\ldots,v'_{i-1},v'_i,v_{i+1},\ldots,
    v_{k-1}),q,i+1)\bigr)$ for all $(v_i,v'_i) \in E$ and all $i \in \set{0,1,\ldots,\mathmbox{k-2}}$: The $i$-th "path" is updated by moving to a successor vertex in $\mytsys$.
    
    \item $\bigl(((v'_0,\ldots,v'_{k-2},
    v_{k-1}),q,k-1),((v'_0,\ldots,,v'_{k-2},v'_{k-1}),q',0)\bigr)$ for all $(v_{k-1},v'_{k-1}) \in E$ where $q' = \delta(q,\combine{\lambda(v'_0),\lambda(v'_1),\ldots,\lambda(v'_{k-1})})$: The last "path" is updated by moving to a successor vertex in $\mytsys$. Simultaneously, the state of $\auti{k-1}$ checking $\mypsi$ is updated.
    
\end{itemize}
The initial position is $((\bullet,\ldots,\bullet),q_\initmark,0)$.
We use a "parity" winning condition. 
The color of all positions of the initialization phase is $0$ (the color is of no consequence as these vertices are seen only once during the course of a play).
The color of positions of the form~$((v_0,\ldots,v_{k-1}),q,i)$ is the color that $q$ has in $\auti{k-1}$, i.e., a play is winning for the "Verifier-players" if and only if the trace assignment picked by them and "Falsifier" satisfies~$\psi$.

The game described must be a game of hierarchical information to capture the fact that the "Skolem function" for an existentially quantified $\pi_i$ depends only on the universally quantified "variables"~$\pi_{j}$ with $j \in \set{0,2,\ldots, i-1}$.
\AP To capture that, we define an equivalence relation~$\intro*{\equivrel{i}}$ between positions of the game for $i \in \set{1,3,\ldots,\mathmbox{k-1}}$ which ensures that for \vplayer{i}, two positions are indistinguishable if they coincide on their first $i+1$ components and additionally belong to the same player.
Formally, regarding positions from the initialization phase, let $((v_0,\ldots,v_{m-1},\bullet,\ldots,\bullet),q_\initmark,m)$ and $((v'_0,\ldots,v'_{n-1},\bullet,\ldots,\bullet),q_\initmark,n)$ be $\equivrel{i}$-equivalent if $v_j = v'_j$ for all $j \le i$ and $m = n$.
Regarding all other positions, let $((v_0,\ldots,v_{k-1}),p,m)$ and $((v'_0,\ldots,v'_{k-1}),q,n)$ be $\equivrel{i}$-equivalent if $v_j = v'_j$ for all $j \le i$ and $m = n$.
Now, we define the observation functions: For \vplayer{i}, it maps positions to their $\equivrel{i}$-equivalence classes.

%%%%%%%%%%%%%%%%%%%%%%%%%%%%%%%%%%%%%%%
%%%%%%%%%%%%%%%%%%%%%%%%%%%%%%%%%%%%%%%
%%%%%%%%%%%%%%%%%%%%%%%%%%%%%%%%%%%%%%%
\subsection{The Model-Checking Game with Prophecies}
\label{subsec_prophecies}

The game~$\game{\mytsys, \myphi}$ as introduced in \cref{subsec_game} does not capture $\mytsys \models \myphi$. While it is sound (see \cref{thm_soundprophy} for the empty set of prophecies), i.e., the coalition of "Verifier-players" having a winning collection of strategies for $\game{\mytsys, \myphi}$ implies that $\mytsys \models \myphi$, but the converse is not necessarily true. 
This is witnessed, e.g., by a transition system~$\tsys$ with $
\traces{\tsys} = (\pow{\set{\propo}})^\omega$ and the sentence~$\forall \pi.\ \exists \pi'.\ \propo_{\pi'}\leftrightarrow \F\propo_\pi$.
As explained in the introduction, the "Verifier-player" does not have a strategy to select, step-by-step, a "trace"~$t'$ for $\pi'$ while given, again step-by-step, a trace~$t$ for $\pi$, as the choice of the first letter of $t'$ depends on all positions of $t$.
Hence, the coalition does not win~$\game{\tsys, \phi}$, even though~$\tsys \models \phi$.

\AP In the following, we show how \intro{prophecies}, binding commitments about future moves by "Falsifier", make the game-based approach to model-checking complete. 
In our example, "Falsifier" has to make, with his first move, a commitment about whether he will ever play a $\propo$ or not. This allows the "Verifier-player" to pick the \myquot{right} first letter of $t'$ and thereby win the game, if "Falsifier" honors the commitment. If not, the rules of the game make him loose. 
To add "prophecies" to $\game{\mytsys, \myphi}$, we do not change the rules of the game, but instead modify the transition system the game is played on (to allow "Falsifier" to select truth values for the \intro{prophecy variables}\footnote{Note that "prophecy variables" are Boolean variables that are set by "Falsifier" during each move of a play and should not be confused with "trace variables" or with propositional variables.}, which is the mechanism he uses to make the commitments) and modify the formula (to ensure that "Falsifier" loses if he breaks a commitment). 
Hence, given $\mytsys$ and $\myphi$, we construct~$\tsysmanip$ and $\phimanip$ such that the following two properties are satisfied:
\begin{itemize}
    \item Soundness: If the coalition of "Verifier-players" has a winning collection of strategies for $\game{\tsysmanip, \phimanip}$, then $\mytsys \models \myphi$.
    \item Completeness: If $\mytsys \models \myphi$, then the coalition of "Verifier-players" has a winning collection of strategies for $\game{\tsysmanip, \phimanip}$.
\end{itemize}
The challenge here is to construct the \myquot{right} "prophecies" that allow us to prove completeness, as soundness is independent of the "prophecies" chosen.

Recall that $\pi_0, \pi_2, \ldots, \pi_{k-2}$ resp.\ $\pi_1, \pi_3, \ldots, \pi_{k-1}$ are the universally resp.\ existentially quantified "variables" in our fixed formula~$\myphi$. 
We frequently need to refer to their indices.
\AP Hence, we define $\intro*{\ieven} = \reintro[index]{\set{0,2,\ldots, k-2}}$ and $\reintro*{\iodd} = \reintro[index]{\set{1,3,\ldots, k-1}}$.

We begin by defining the transition system~$\tsysmanip$ from the fixed transition system~$\mytsys$, in which "paths" additionally determine truth values for "prophecy variables".

\AP \begin{definition}[\intro{System manipulation}]\label{def_systemmanip}
Let~$\calP = (\pp_i)_{i \in \ieven}$ be a collection of sets of atomic propositions such that $\ap$, the $\pp_i$, and $\set{\markprop_i \mid i\in \ieven}$ are all pairwise disjoint, where the $\markprop_i$ are propositions used to mark copies of the transition system~$\mytsys = (V, E, V_\initmark, \lambda)$.

For $i \in \ieven$, we define
$\mytsys^{\pp_i} = (V^{\pp_i},E^{\pp_i},V_\initmark^{\pp_i}, \lambda^{\pp_i})$ over $\ap \uplus \pp_i \uplus \set{\markprop_i}$ where $V^{\pp_i} = V \times \pow{\pp_i} \times \set{i}$, $E^{\pp_i} = \set{ ((s,A,i),(s',A',i)) \mid (s,s') \in E \text{ and } A,A' \in \pow{\pp_i}}$, $V_\initmark^{\pp_i} = V_\initmark \times \pow{\pp_i} \times \set{i}$ and $\lambda^{\pp_i}(s,A,i) = \lambda(s) \cup A \cup \set{\markprop_i}$.

Furthermore, we define $\mytsys^{\calP} = (V^{\calP}, E^{\calP}, V_\initmark^{\calP}, \lambda^{\calP})$ as the disjoint union of the $\mytsys^{\pp_i}$, where a vertex of~$\mytsys^{\calP}$ is in $V_\initmark^{\calP}$ if and only if it is in some $V_\initmark^{\pp_i}$. 
Consequently, we have $
\traces{\mytsys^{\calP}} = \bigcup_{i \in \ieven} \set{ {t \merge {t'} \merge \set{\markprop_i}^\omega} \mid 
t \in \traces{\mytsys} \text{ and }
 t' \in 
 {(\pow{{\pp_i}})}^\omega 
 }
$.
\end{definition}
An effect of this definition is that all "paths" trough the manipulated  system select truth values for the "prophecy variables" with each move. 
Furthermore, each such "path" is marked by a proposition of the form~$\markprop_i$ indicating for which $i$ the "prophecy variables" are selected.
This will later be used to ensure that "Falsifier" selects the correct "prophecies" (cf.\ \cref{def_propmani}) for each "path" he picks for a universally quantified variable.
For the "Verifier-players", this additional information is simply ignored, as can be seen from the manipulated formula (cf.\ \cref{def_propmani}) we introduce now.

We define the sentence~$\phimanip$ which ensures that "Falsifier" loses, if he breaks his commitments made via "prophecy variables":
for every "prophecy variable", we have an associated \reintro{prophecy}, a language~$\prophy \subseteq ((\pow{\ap})^i)^\omega$ for some $i$.
Note that "Falsifier" can only make commitments about his own moves, as the "Verifier-players" could otherwise falsify the commitments (made by "Falsifier") about their moves: "Prophecies" can only refer to "traces" for universally quantified "variables".
Also, we will have "prophecies" for each universally quantified "variable".

\AP \begin{definition}[\intro{Property manipulation}]\label{def_propmani}
Let $\Xi = (\Xi_i)_{i \in \ieven}$ be a family~$\Xi_i = \set{\xi_{i,1}, \ldots, \xi_{i,n_i}}$ of sets of \qptl sentences (over $\ap$) such that each $\xi \in \Xi_i$ uses only "trace variables"~$\pi_j$ with even $j \leq i$.
Let $\calP = (\pp_i)_{i \in \ieven} $ satisfy the disjointness requirements of \cref{def_systemmanip} and $\pp_i = \set{\propo_{i,1}, \ldots, \propo_{i,n_i}}$, i.e., we have $\size{\Xi_i} = \size{\pp_i}$.
We define the \hyqptl sentence~$\reintro*{\phimanip}$ as
\reintro[property manipulation]{
\[
\forall \pi_0 \exists \pi_1 \cdots \forall \pi_{k-2} \exists \pi_{k-1}.\ \left[ \bigwedge\nolimits_{i \in \ieven}
(\markprop_i)_{\pi_i} \wedge
 \G \left( \bigwedge\nolimits_{\ell=1}^{n_i} (({\propo_{i,\ell}})_{\pi_{i}} \leftrightarrow \xi_{i,\ell}) \right) \right] \rightarrow \psi.
\]
}
We denote the trace quantifier-free part of $\phimanip$ by $\intro*{\psimanip}$.
\end{definition}
Note that $\phimanip$ has the same quantifier prefix as $\myphi$ and that restricting each $\xi_{i,\ell}$ to "trace variables"~$\pi_j$ with even $j \le i$ ensures that the "prophecies" only refer to "trace variables" under the control of "Falsifier".
Also note that the truth values of "prophecy variables" on "trace variables" under the control of the "Verifier-players" are not used in the formula.
Finally, "Falsifier" has to pick the $i$-th path in $\mytsys^{\pp_i}$ (and thus select valuations for the "prophecy variables" in $\pp_i$), otherwise he loses immediately. 

Our construction is sound, independently of the choice of "prophecies".

\begin{lemma}[name={},restate=Soundprophy]\label{thm_soundprophy}
Let $\Xi$ and $\calP$ satisfy the requirements of \cref{def_propmani}.
If the coalition of "Verifier-players" has a winning collection of strategies in $\game{\tsysmanip, \phimanip}$, then $\mytsys \models \myphi$.
\end{lemma}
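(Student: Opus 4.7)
The plan is to extract "Skolem functions" witnessing $\mytsys \models \myphi$ from any winning collection of strategies $(s_i)_{i \in \iodd}$ in $\game{\tsysmanip, \phimanip}$.

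Fix any tuple~$(t_i)_{i \in \ieven}$ with each~$t_i \in \traces{\mytsys}$, representing choices of "Falsifier" for the universally quantified "variables". The first step is to produce a truthful augmentation $\hat t_i \in \traces{\mytsys^{\pp_i}}$ of each~$t_i$: at every position~$j$, I set the value of $\propo_{i,\ell}$ in $\hat t_i(j)$ so that the equivalence $(\propo_{i,\ell})_{\pi_i} \leftrightarrow \xi_{i,\ell}$ holds at position~$j$ under the "variable assignment" mapping each $\pi_{i'}$ with even $i' \le i$ to $\hat t_{i'}$. This is well defined because $\xi_{i,\ell}$ is a "QPTL" sentence over $\ap$ using only "trace variables"~$\pi_{i'}$ with even $i' \le i$, so by the independence-of-$T$ remark from the preliminaries its truth value at position~$j$ depends only on the $\ap$-projections $t_0, t_2, \ldots, t_i$ and not on the prophecy-components being chosen. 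Hence the $\hat t_i$ can be built inductively in~$i$.

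Next, I simulate a play of $\game{\tsysmanip, \phimanip}$ in which "Falsifier" plays, vertex by vertex, the "paths" underlying the $\hat t_i$ while each \vplayer{i} responds according to~$s_i$. Writing $t_i^* \in \traces{\tsysmanip}$ for the "trace" produced for each $i \in \iodd$, I define
\[
f_i(t_0, t_2, \ldots, t_{i-1}) \;=\; \text{the $\ap$-projection of }\, t_i^*,
\]
which lies in $\traces{\mytsys}$ because any "path" through $\tsysmanip$ projects to a "path" through $\mytsys$. The hierarchical information in $\game{\tsysmanip, \phimanip}$ forces $s_i$ to depend only on the first $i+1$ components of the history; an induction on~$i$ then shows that, for odd $j < i$, the $\ap$-projection of $t_j^*$ depends only on $t_0, t_2, \ldots, t_{j-1}$, so $t_i^*$ depends only on $t_0, t_2, \ldots, t_{i-1}$, making each $f_i$ a legitimate "Skolem function" input-wise.

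To close the argument, the simulated play is consistent with $(s_i)_{i \in \iodd}$ and hence winning, so the "trace assignment"~$\Pi'$ with $\Pi'(\pi_i)=\hat t_i$ for $i \in \ieven$ and $\Pi'(\pi_i)=t_i^*$ for $i \in \iodd$ "satisfies"~$\psimanip$. The antecedent of the implication in $\psimanip$ holds by construction: each $(\markprop_i)_{\pi_i}$-conjunct because $\hat t_i \in \traces{\mytsys^{\pp_i}}$, and each $\G$-conjunct by truthful augmentation. Therefore $\Pi' \models \psi$. Since $\psi$ is built from labeled atomic formulas $\propo_\pi$ with $\propo \in \ap$ together with bound propositional quantifiers, its truth value at $\Pi'$ coincides with its value at the $\ap$-projected "variable assignment"~$\Pi$ sending $\pi_i$ to $t_i$ for $i \in \ieven$ and to $f_i(t_0, t_2, \ldots, t_{i-1})$ for $i \in \iodd$, giving $\Pi \models \psi$. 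Letting $(t_i)_{i \in \ieven}$ range arbitrarily yields "Skolem functions" witnessing $\mytsys \models \myphi$. The main obstacle I anticipate is the careful book-keeping in the chain $t_i \leadsto \hat t_i \leadsto t_i^* \leadsto f_i$ to ensure that the hierarchical-information constraints really translate into the restricted input dependencies required of each "Skolem function"; everything else is a direct unfolding of \cref{def_systemmanip}, \cref{def_propmani}, and the semantics of "HyperQPTL".
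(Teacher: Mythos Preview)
Your proposal is correct and follows essentially the same approach as the paper's own proof: augment the universally chosen traces with truthful prophecy values (and the marker~$\markprop_i$), simulate a play against the winning collection of strategies, observe that the premise of $\psimanip$ is satisfied by construction so that $\psi$ must hold, and then use the hierarchical-information constraint to argue that the resulting outputs define legitimate Skolem functions. The only cosmetic difference is that the paper phrases the augmentation directly at the level of paths~$\rho_i$ in $\tsysmanip$ rather than first building traces~$\hat t_i$ and then choosing underlying paths; be sure in your write-up to fix that path choice deterministically (e.g., via a fixed choice function) so that each $f_i$ is indeed a function of $t_0, t_2, \ldots, t_{i-1}$.
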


\begin{proof}
Let the coalition of "Verifier-players" have a winning collection of strategies in $\game{\tsysmanip, \phimanip}$. 
We construct "Skolem functions"~$f_\pi$ for the "variables"~$\pi$ existentially quantified in $\myphi$ witnessing~$\mytsys \models \myphi$.

Let $t_0, t_2, \ldots t_{k-2}$ be "traces" of $\mytsys$ for the universally quantified "variables".
For each such $t_i$, we fix a path~$\rho_i$ through~$\tsysmanip$ such that the (point-wise) $\ap$-projection of the "trace" of $\rho_i$ is $t_i$, such that $\markprop_i$ holds on every position of $\rho_i$, and such that the "prophecy variables"~$\propo_{i,\ell}$ are picked correctly, 
i.e., such that $\propo_{i,\ell}$ is satisfied at position~$n$ if and only if the suffixes of $t_0, t_2, \ldots, t_{i}$ starting at position~$n$ satisfy $\xi_{i,\ell}$.

Consider the play of $\game{\tsysmanip, \phimanip}$ where "Falsifier" plays such that he constructs the $\rho_i$ and the "Verifier-players" use their winning collection of strategies to construct "paths"~$\rho_1, \rho_3, \ldots, \rho_{k-1}$ through $\tsysmanip$. 

Recall that $\lambda^{\calP}$ is the labelling function of $\tsysmanip$.
As the collection of strategies is winning, 
\[
\combine{\lambda^{\calP}(\rho_0),\lambda^{\calP}(\rho_1),\ldots,\lambda^{\calP}(\rho_{k-1})}
\]
is accepted by the automaton constructed from~$\phimanip$, i.e., the "trace assignment"~$\Pi$ mapping each $\pi_i$ to $\lambda^{\calP}(\rho_i)$ satisfies 
\[
\left[ \bigwedge\nolimits_{i \in \ieven}
(\markprop_i)_{\pi_i} \wedge
 \G \left( \bigwedge\nolimits_{\ell=1}^{n_i} (({\propo_{i,\ell}})_{\pi_{i}} \leftrightarrow \xi_{i,\ell}) \right) \right] \rightarrow \mypsi.
\]
By construction of the $\rho_i$ for $i \in \ieven$, $\Pi$ satisfies the premise 
\[
 \bigwedge\nolimits_{i \in \ieven}
(\markprop_i)_{\pi_i} \wedge
 \G \left( \bigwedge\nolimits_{\ell=1}^{n_i} (({\propo_{\ell,i}})_{\pi_{i}} \leftrightarrow \xi_{\ell,i}) \right) 
.\]
Hence, $\Pi$ must satisfy $\mypsi$.

Finally, by the definition of the hierarchical information in the game~$\game{\tsysmanip, \phimanip}$, $\rho_i$ for even $i$ only depends on the "paths"~$\rho_0, \rho_1, \ldots, \rho_{i-1}$, but not on the "paths"~$\rho_{i+1}, \rho_{i+2}, \ldots, \rho_{k-1}$ (as they are hidden to the $i$-th "Verifier-player").
Thus, we can inductively, for $i \in \iodd$, define $f_{\pi_i}(t_0, t_2, \ldots, t_{i-1}) $ as the $\ap$-projection of $\lambda^{\calP}(\rho_i)$.

Then, the functions~$f_\pi$ just defined are indeed "Skolem functions" witnessing $\mytsys \models \myphi$.
\end{proof}

%%%%%%%%%%%%%%%%%%%%%%%%%%%%%%%%%%%%%%%
%%%%%%%%%%%%%%%%%%%%%%%%%%%%%%%%%%%%%%%
%%%%%%%%%%%%%%%%%%%%%%%%%%%%%%%%%%%%%%%
\section{Complete Prophecies for Safety Properties}
\label{sec_safety}

We show that there are sets $\Xi$ and $\calP$ such that if $\mytsys \models \myphi$, then the coalition of "Verifier-players" wins $\game{\tsysmanip,\phimanip}$.
We first consider the case where $\mypsi$ is a "safety" property, i.e., the automaton~$\auti{k-1}$, which checks $\mypsi$, is a deterministic "safety automaton".
\AP A \intro{safety automaton} is a "parity automaton" using only two colors, an even one and an odd one, and moreover, all its states are even-colored except for an odd-colored sink state.
In other words, all runs avoiding the single unsafe state are accepting.
We start with "safety" as this allows us to work with \myquot{simpler} "prophecies" while presenting all underlying concepts needed for arbitrary quantifier prefixes. The idea for "safety" is that a "prophecy" should indicate which successor vertices are safe for the "Verifier-players" to move to, i.e., from which successor vertices it is possible to successfully continue the play without immediately losing.
In the general case, we have to additionally handle a more complex acceptance condition. 
This is shown in \cref{sec_rabin}.

\begin{definition}[Prophecy construction for safety]
\AP 
For each $i \in \ieven$, each vector $\bar{v} = (v_0,v_1,\ldots,v_{i+1})$ of vertices of $\mytsys$, and each state $q$ of $\auti{i+1}$ we define
\begin{equation*}
\begin{split}
     & \intro*{\safeprophy{i}{q}{\bar{v}}} = \{ \combine{t_0,t_2,\ldots,t_{i}} \mid t_0 \in \traces{\mytsys_{v_0}}, t_2 \in \traces{\mytsys_{v_2}},\ldots,t_{i} \in \traces{\mytsys_{v_{i}}}, \text{ and there are } \\ 
     & \quad  t_{1} \in \traces{\mytsys_{v_1}}, t_{3} \in \traces{\mytsys_{v_3}},\ldots, t_{i+1} \in \traces{\mytsys_{v_{i+1}}} \text{ s.t.\ } \combine{t_0,\ldots,t_{i+1}} \in L(\auti{i+1}[q]) \}. 
\end{split}
\end{equation*}
\end{definition}

Note that the $t_i$ for $i \in \iodd$ may depend on all $t_j$ with $j \in \ieven$.
Our game definition ensures that the choice for an existential variable~$\pi_i$ only depends on the choices for $\pi_j$ with~$j< i$.

Also note that each "prophecy" is an $\omega$-regular language, as $\omega$-regular languages are closed under projection, the analogue of existential quantification. 
So, due to \cref{prop_automataconstruction}, there are \qptl-sentences expressing the "prophecies" allowing us to verify them in the formula~$\phimanip$.

 Next, we define the sets~$\Xi = (\Xi_i)_{i \in \ieven}$ and $\calP = (\pp_i)_{i \in \ieven}$ of \qptl formulas expressing the "prophecies" defined above and the corresponding "prophecy variables".
\begin{definition}
\label{def_safetymanip}
 Let $\Xi_i$ be the set of \qptl formulas that contains sentences~$\pformula{i}{q}{\bar{v}}$ for each state~$q$ of $\auti{i+1}$ and each vector $\bar{v}$ of vertices expressing the "prophecy"~$\safeprophy{i}{q}{\bar{v}}$ using only "trace variables" $\pi_j$ with even $j \leq i$.
 \AP Let $\pp_i$ be the set that contains the "prophecy variable" \intro*{$\psafe{i}{q}{\bar{v}}$} for each state $q$ of $\auti{i+1}$ and each vector $\bar{v}$ of vertices. 
 The "prophecy variable" $\psafe{i}{q}{\bar{v}}$ corresponds to $\pformula{i}{q}{\bar{v}}$.
\end{definition}

Recall that $\mypsi$ is the maximal trace quantifier-free subformula of $\myphi$.
Our main technical lemma shows that the "prophecies" defined above are indeed complete.

\begin{lemma}[name={},restate=Safetycompleteness]
\label{thm_safetycompleteness}
 Let $\mytsys \models \myphi$ such that $\mypsi$ is a "safety" property.
 Then the coalition of "Verifier-players" has a winning collection of strategies in $\game{\tsysmanip, \phimanip}$ with $\Xi$ and $\calP$ as in \cref{def_safetymanip}.
\end{lemma}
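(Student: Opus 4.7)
The plan is to use $\mytsys \models \myphi$ to construct a winning collection of strategies for the Verifier-players in $\game{\tsysmanip,\phimanip}$ in which each Verifier-player $i$ consults, at her turn, the prophecy bits that Falsifier has truthfully committed to on the adjacent universal path $\pi_{i-1}$ and picks a successor preserving the invariant ``from the current configuration the tail formula $\phi_i$ is still witnessable''. Because $\safeprophy{i-1}{q}{\bar{v}}$ is by design exactly this winnability predicate, the invariant collapses to reading a single truthful prophecy bit.

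Concretely, at Verifier-player $i$'s turn at round $i$ of macro-round $n$ the game position is $((v'_0,\ldots,v'_{i-1},v_i,v_{i+1},\ldots,v_{k-1}),q,i)$ and her observation gives the first $i+1$ entries together with the round; her observation history lets her reconstruct the prefixes of each of $\pi_0,\ldots,\pi_{i-1}$ and, for each candidate $v'_i\in\Succ{v_i}$, compute the hypothetical state $q'$ of $\auti{i}$ after reading the $(n{+}1)$-th merged letter. She then inspects the label of $v'_{i-1}$ for the prophecy bit $\psafe{i-1}{q'}{(v'_0,\ldots,v'_{i-1},v'_i)}$ and selects, in a fixed ordering on $V$, the first $v'_i$ for which that bit is true; if none exists she plays arbitrarily. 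The initialization rounds are handled analogously, with $q'$ the initial state of $\auti{i}$ and the prophecy read at position~$0$ on $\pi_{i-1}$.

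Correctness then splits into two cases. First, if Falsifier ever violates a prophecy commitment $\G((\propo_{i',\ell})_{\pi_{i'}}\leftrightarrow\xi_{i',\ell})$ on a path he controls, the premise of the implication in $\phimanip$ is falsified along the constructed paths and the coalition wins vacuously. Second, assuming Falsifier honors every commitment, the truthful prophecy bits coincide with the actual predicates $\safeprophy{i-1}{q'}{\bar{v}}$ evaluated on the universal suffix $\combine{\pi_0,\pi_2,\ldots,\pi_{i-1}}$, so I maintain, by induction on rounds, the invariant that after every Verifier move the actual universal suffix lies in the selected prophecy set. The base case at round $1$ of the initialization phase is immediate from $\mytsys\models\myphi$: for Falsifier's $v_0$, the outer $\forall\pi_0\exists\pi_1\cdots\psi$ exhibits an initial $v_1$ witnessing $\safeprophy{0}{q_\initmark}{(v_0,v_1)}$, and the truthful prophecy bit reveals this witness to Verifier-player~$1$. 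For the inductive step, the previous invariant, combined with Falsifier's honest commitments on $\pi_{i-1}$ between rounds $i{-}2$ and $i$, unwinds one quantifier alternation of $\phi_{i-2} = \forall\pi_{i-1}\exists\pi_i\ldots$ to produce a $v'_i\in\Succ{v_i}$ for which the new prophecy bit is true, so the strategy indeed finds such a candidate.

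When the invariant is preserved throughout, the state of the automaton tracked by the game never enters the rejecting sink of the safety automaton for $\psi$, because a winnable configuration has a nonempty set of accepted universal suffixes. Hence every consistent play satisfies $\psimanip$, and combined with the vacuous-truth case the collection of Verifier strategies is winning. The main obstacle I expect is the hierarchical-information bookkeeping: I must verify that every datum the strategy touches—observed vertices, deducible automaton states, and prophecy bits on $\pi_{i-1}$—lies inside Verifier-player $i$'s $\equivrel{i}$-observation, while the existence of a good $v'_i$ is drawn from a global semantic argument about $\mytsys\models\myphi$. The fact that $\safeprophy{i-1}{q}{\bar{v}}$ depends only on the universal suffix of $\pi_0,\pi_2,\ldots,\pi_{i-1}$, with the starting vertices of odd-indexed paths entering merely as parameters, is precisely the design choice that makes this reconciliation work.
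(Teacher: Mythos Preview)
Your strategy definition and the overall architecture (split into the vacuous case and the truthful-prophecy case, then argue the safety sink is never reached) are correct and match the paper. However, the invariant you propose---``after every Verifier move the actual universal suffix lies in the selected prophecy set''---is too weak to be carried inductively once you leave the initialization round, and your inductive step glosses over exactly the place where it breaks.

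Membership of Falsifier's suffix $(t_0,t_2,\ldots,t_{i-1})$ in $\safeprophy{i-1}{q_i}{(v'_0,\ldots,v'_i)}$ only asserts the existence of witnesses $t_1,t_3,\ldots,t_i$ from $v'_1,\ldots,v'_i$ making $\auti{i}[q_i]$ accept; these witnesses may depend on the \emph{entire} universal suffixes and are not tied to the Verifier-players' \emph{actual} moves. Consider the step to player~$3$'s invariant in some round~$r>0$. If you ``unwind $\phi_1=\forall\pi_2\exists\pi_3\ldots$'' from player~$1$'s invariant, you obtain some $s_3\in\traces{\mytsys}$ with $\models\phi_3$, but nothing forces $s_3$ to agree with player~$3$'s already-played prefix $\lambda(\rho_3[0,r))$---and that prefix is baked into the state $q_3$ that $\safeprophy{2}{q_3}{\cdot}$ refers to. If instead you try to advance player~$3$'s own invariant from round~$r-1$, its witness for slot~$1$ need not pass through the vertex player~$1$ actually chose in round~$r$. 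Either way the induction stalls: the witnesses for different players and rounds are uncoordinated.

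The paper resolves this by maintaining a strictly stronger invariant, the fully alternating statement
\[
\forall t_0\in\traces{\mytsys_{\Succ{v_0}}}\ \exists t_1\in\traces{\mytsys_{\Succ{v_1}}}\cdots\exists t_{k-1}:\ \combine{t_0,\ldots,t_{k-1}}\in L(\auti{k-1}[q_{k-1}]),
\]
phrased over the single automaton $\auti{k-1}$ and over \emph{all} universal continuations, not just Falsifier's. Because the existentials respect the quantifier order, each can be restricted to a fixed starting vertex one player at a time (inner induction), and because the statement is universal in the even slots it survives one synchronous step of all paths (outer induction). Only then does the paper derive, as a corollary, that each player's prophecy set is in fact \emph{full}---hence in particular contains Falsifier's suffix and the bit is set.

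A minor separate point: your state $q'$ is off by one. The prophecy $\safeprophy{i-1}{q}{(v'_0,\ldots,v'_i)}$ takes $q$ to be the state of $\auti{i}$ \emph{before} the current letter (the traces from $v'_j$ supply that letter themselves), not after.
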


Before we turn to the proof of \cref{thm_safetycompleteness}, we introduce some notation about the game~$\game{\tsysmanip, \phimanip}$ and define the strategies we will prove winning.

Firstly, recall that $\tsysmanip$ is the union $\biguplus_{i \in \ieven} \mytsys^{P_i}$.
Thus, a vertex~$v$ of $\tsysmanip$ is of the form~$(s,A,i) \in V \times \pow{\pp_{i}} \times \set{i}$ for some $\pp_i$, where $V$ is the set of vertices of $\mytsys$, and the label~$\lambda^{\pp_i}(v)$ of $v$  is $\lambda(s) \cup A \cup \set{\markprop_i}$, i.e., the union of its original (meaning $\ap$-based) label~$\lambda(s)$ in $\mytsys$, its associated "prophecy variables"~$A$, and its marker~$\markprop_i$ indicating that $v$ belongs to $\mytsys^{\pp_{i}}$.
\AP We need to access these bits of information. Hence, we define $\intro*{\orig{v}} = \lambda(s)$, $\intro*{\prophecies{v}} = A$, and $\intro*{\marking{v}} = \markprop_i$.
Given a "path"~$\rho = v_0v_1v_2\cdots$ through $\tsysmanip$, let $\reintro*{\orig{\rho}} = \orig{v_0}\orig{v_1}\orig{v_2}\cdots$.
Moreover, for $v = (s,A,i)$ in $\mytsys^{\pp_{i}}$, we are often interested in reasoning about $s$ in $\mytsys$. To simplify our notation, we will also write $v$ for the (unique) vertex $s$ in $\mytsys$ induced by $v$. 

Secondly, recall that $\psimanip$ is the trace quantifier-free part of $\phimanip$.
Let $\autb$ denote the "parity automaton" $(Q_\autb,\Sigma,q_\initmark^\autb,\delta_\autb,\col_\autb)$ that accepts the "trace assignments" that satisfy $\psimanip$.

Lastly, recall that a position of $\game{\tsysmanip, \phimanip}$ is of the form $((v'_0,\ldots,v'_{i-1},v_i,\ldots, v_{k-1}),q,i)$ where $v'_0,\ldots,v'_{i-1},v_{i},\ldots,v_{k-1}$ are vertices in the "transition system"~$\tsysmanip$, $q \in Q_\autb$ is a state of the "parity automaton" $\autb$ checking $\psimanip$, and $i \in \set{0,1,\ldots,\mathmbox{k-1}}$.
Technically, in the initialization phase, a position is of the form $((v_0,\ldots,v_{i-1},\underbrace{\bullet,\ldots,\bullet}_{k-i \text{ times}}),q_\initmark,i)$ where $v_0,\ldots,v_{i-1}$ are vertices and $\bullet$ is a placeholder.
For simplicity, we always refer to a position with $((v'_0,\ldots,v'_{i-1},v_i,\ldots, v_{k-1}),q,i)$, even though $\bullet$ might be present.
Also, for convenience, we define the successors $\Succ{\bullet}$ of the placeholder symbol~$\bullet$ to be $V_\initmark$, the set of initial vertices of $\mytsys$.

\AP We say a play in $\game{\tsysmanip, \phimanip}$ is in \intro*{\round{r}[i]} for $r \geq 0$ and $i \in \set{0,1,\ldots,k-1}$ if the play is in a position of the form $((v'_0,\ldots,v'_{i-1},v_i,\ldots, v_{k-1}),q,i)$ for the $(r+1)$-th time.
Also, we say we are in \reintro*{\round{r}} if the play is in \round{r}[i] for some $i$.

Given a play $\alpha$ in $\game{\tsysmanip, \phimanip}$, for $i \in \set{0,1,\ldots,k-1}$, let $\rho_i^{\alpha}$ denote the "path" through $\tsysmanip$ that is induced by $\alpha$ when considering the moves made in \round{r}[i] for $r \geq 0$:
Formally, if in \round{r}[i] the move from position $p = ((v'_0,\ldots,v'_{i-1},v_i,\ldots,v_{k-1}),q,i)$ to $p' = ((v'_0,\ldots,v'_{i-1},v'_i,\ldots, v_{k-1}),q',(i+1) \mod k)$ is made, then $\rho_i^{\alpha}(r) = v'_i$.
Note that $q' = q$, unless $i = k-1$, then \[q' = \delta_{\autb}(q,\combine{\lambda^{\calP}(v'_0),\ldots,\lambda^{\calP}(v'_{k-1})}).\]
Furthermore, note that the move $(p,p')$ is uniquely identified by $p$ and $v'_i$.
So in the future, we simply write the move from $v_i$ to $v'_i$ when $p$ is clear from the context.
We drop the index $\alpha$ from $\rho_i^{\alpha}$ and write~$\rho_i$ when $\alpha$ is clear from the context.

We continue by defining a collection of strategies for the "Verifier-players" and then show that $\mytsys \models \myphi$ implies that this collection is winning in $\game{\tsysmanip, \phimanip}$.
Recall that $\phimanip$ is $\forall \pi_0 \exists \pi_1 \cdots \forall \pi_{k-2} \exists \pi_{k-1}.\ \psimanip$.
Assume the play is in \round{r}[i] for $r \geq 0$ and $i \in \iodd$.
Thus, it is in a position of the form
\[
((v'_0,\ldots,v'_{i-1},v_i,\ldots,v_{k-1}),q,i)
\]
and \vplayer{i} has to move.
We review some information \vplayer{i} can use to base her choice on.
In \round{r}, \vplayer{i} has access to the first $r$ letters of the "traces" $\orig{\rho_0},\orig{\rho_1},\ldots,\orig{\rho_i}$ induced by the play.\footnote{\label{foot_strategy}For $j < i$, she actually has access to $r+1$ letters, but our construction is independent of the last ones.}
Let $q_{i}$ be the state that $\auti{i}$ has reached on $\combine{\orig{\rho_0}[0,r),\orig{\rho_1}[0,r),\ldots,\orig{\rho_i}[0,r)}$ for $i \in \iodd$.

\begin{definition}[Strategy definition for safety]
\label{def_strategysafety}
\AP
Let
\[\possmoves{i}{v_0',\ldots,v'_{i-1}, v_i, q_{i}} = \set{ v'_i \in \Succ{v_i} \mid \psafe{i-1}{q_{i}}{\bar{v}} \in \prophecies{v'_{i-1}} }\]
where $\bar{v} = (v_0',\ldots,v'_i)$.
If $\possmoves{i}{v_0',\ldots,v'_{i-1}, v_i, q_{i}}$ is nonempty, then \vplayer{i} can move to any $v'_i$ in the set (the \myquot{\nonemptycase-case}).
If it is empty, then \vplayer{i} can move to any $v'_i \in \Succ{v_i}$ (the \myquot{\emptycase-case}).
\end{definition}

\begin{proof}[Proof of \cref{thm_safetycompleteness}.]
Let us prove that the strategies defined above form a winning collection.
To this end, let $\alpha$ be a play in $\game{\tsysmanip, \phimanip}$ that is consistent with the strategies and satisfies the following assumption.

\begin{assumption}
\label{assump_strategy}
Let $\rho_0^\alpha,\ldots,\rho_{k-1}^\alpha$ be the paths induced by $\alpha$ and let $\Pi$ be the "trace assignment" mapping $\pi_i$ to $\lambda^{\calP}(\rho_i)$ for $i \in \ieven$. We assume that $\Pi$ satisfies the premise of $\psimanip$.
\end{assumption}

Intuitively, this assumption is satisfied by "Falsifier", if he picks "paths" that start in the intended sub-parts of the manipulated "transition system", and furthermore, he always truthfully indicates which "prophecies" hold.
Plays in which the assumption is violated are trivially won by the "Verifier-players".
Thus, in the following we can focus on $\alpha$'s satisfying the assumption.

We show by induction over \round{r}[i] for $r \ge 0$ and $i \in \iodd$ that under this assumption, \vplayer{i} used the \nonemptycase-case in this round to pick her move. 
To proceed by nested induction, i.e., with an outer induction going from \round{r}[0] to \round{r+1}[0] for each $r \ge 0$, and an inner induction going from \round{r}[i] to \round{r}[i+2] for fixed $r \ge 0$ and $i \in \iodd \setminus \set{k-1}$.

We begin with the outer induction.
At the beginning of \round{r}[0], the play~$\alpha$ is in a position of the form~$((v_0,\ldots,v_{k-1}),q,0)$.
Let $q_{k-1}$ be the state that $\auti{k-1}$ has reached on 
\[
\combine{\orig{\rho_0}[0,r),\orig{\rho_1}[0,r),\ldots,\orig{\rho_{k-1}}[0,r)}.
\]
We prove by induction on $r$ the following auxiliary statement:
\begin{align}
\label{eq_safe}
 & \forall {t}_0 \in \traces{\mytsys_{\Succ{v_0}}}\  \exists {t}_1 \in \traces{\mytsys_{\Succ{v_1}}} \cdots \forall {t}_{k-2} \in \traces{\mytsys_{\Succ{v_{k-2}}}}\  \exists {t}_{k-1} \in \traces{\mytsys_{\Succ{v_{k-1}}}} :\notag\\
 & \qquad\combine{{t}_0,\ldots,{t}_{k-1}} \in L(\auti{k-1}[q_{k-1}]). 
\end{align}

\subparagraph*{Outer Base Case, \boldmath\round{0}[0].}
The starting position of the game, which is the position in \round{0}[0], is $((\bullet,\ldots,\bullet),q_\initmark^\autb,0)$. As $\Succ{\bullet} = V_\initmark$, $\mytsys_{V_\initmark} = \mytsys$, and $\auti{k-1}$ has reached $q_\initmark$ on the empty prefix, \cref{eq_safe} is equivalent to $\tsys \models \myphi$, which is true by assumption of the statement of \cref{thm_safetycompleteness}.

\subparagraph*{Outer Inductive Step, \boldmath\round{r}[0] to \boldmath\round{r+1}[0].}
%By induction hypothesis, at the beginning of \round{r}, \cref{eq_safe} is true.
To show \cref{eq_safe} for \mbox{\round{r+1}[0]}, we inductively go over the rounds \round{r}[i] for each \vplayer{i}.

In \round{r}[i], as mentioned in the strategy definition (cf.\ \cref{def_strategysafety}), \vplayer{i} has access to the first $r$ letters of the "traces" $\orig{\rho_0},\orig{\rho_1},\ldots,\orig{\rho_i}$ induced by the play prefix.
Let $q_{i}$ be the state that $\auti{i}$ has reached on 
\[
\combine{\orig{\rho_0}[0,r),\orig{\rho_1}[0,r),\ldots,\orig{\rho_{j}}[0,r)}
\]
for $i \in \iodd$.
In \round{r}[i], the play is in a position $((v'_0,\ldots,v'_{i-1},v_i,\ldots,v_{k-1}),q,i)$.
We prove by induction on $i \in \iodd$ another auxiliary statement and our original claim: there exists some $ v'_{i} \in \Succ{v_i}$ such that
\begin{align}
\label{eq_safe2}
 & \forall {t}_0 \in \traces{\mytsys_{v'_0}}\ \exists {t}_1 \in \traces{\mytsys_{v'_1}}\cdots
 \forall {t}_{i-1} \in \traces{\mytsys_{v'_{i-1}}}\ \exists {t}_i \in \traces{\mytsys_{v'_i}} \notag\\
  & \quad \forall {t}_{i+1} \in \traces{\mytsys_{\Succ{v_{i+1}}}}\ \exists {t}_{i+2} \in \traces{\mytsys_{\Succ{v_{i+2}}}} \cdots \forall {t}_{k-2} \in \traces{\mytsys_{\Succ{v_{k-2}}}}\ \exists {t}_{k-1} \in \traces{\mytsys_{\Succ{v_{k-1}}}} : \notag\\ 
 & \quad\quad \combine{{t}_0,\ldots,{t}_{k-1}} \in L(\auti{k-1}[q_{k-1}]),
\end{align}
and
\begin{align}
\label{eq_proposition}
   \prophecies{v'_{i-1}} \text{ contains } \psafe{i-1}{q_{i+1}}{\bar{v}} \text{ with } \bar{v} = (v'_0,\ldots,v'_{i}).
\end{align}
Note that \cref{eq_proposition} for \round{r}[i] witnesses that \vplayer{i} used the \nonemptycase-case to pick her move in \round{r}[i], which was our original claim we prove by induction.

\subparagraph*{Inner Base Case, \boldmath\round{r}[1].}
In \round{r}[0], \cref{eq_safe} is true. We now restrict the considered "traces" for the first two quantifiers.
Clearly, the statement remains true when restricting the first quantifier which is universal to "traces" from "paths" starting with $v'_0 \in \Succ{v_0}$.
Furthermore, there exists some $v'_1 \in \Succ{v_1}$ such that the second quantifier which is existential can be restricted to all "traces" from "paths" starting with $v'_1$ and the statement remains true.
This is the statement of \cref{eq_safe2} for \round{r}[1].
Recall that 
\[
\safeprophy{0}{(q_{1})}{(v'_0,v'_1)} = \set{ t_0 \mid t_0 \in \traces{\mytsys_{v'_0}} \text{ and there is } t_1 \in \traces{\mytsys_{v'_1}} \text{ s.t.\ } \combine{t_0,t_1} \in L(\auti{1}[q_1])}.
\]
\cref{eq_safe2} for \round{r}[1] yields that 
\[
\safeprophy{0}{(q_{2})}{(v'_0,v'_1)} = \traces{\mytsys_{v'_0}}.
\]
Since "Falsifier" has moved to $v'_0$, he is committed to play one of those "paths", i.e., 
\[
\lambda^{\calP}(\rho_0)[r,\infty) \in \safeprophy{i-1}{(q_1)}{(v'_0,v'_1)}.
\]
\cref{assump_strategy} implies that "Falsifier" has truthfully indicated this, i.e., 
\[
\psafe{0}{(q_1)}{(v'_0,v'_1)} \in \prophecies{v'_0}.
\]
This is the statement of \cref{eq_proposition}.
Hence, \vplayer{1} can move to $v'_1$ in \round{r}[1], i.e., we are in the \nonemptycase-case.
This completes the inner base case.

\subparagraph*{Inner Inductive step, \boldmath\round{r}[i] to \boldmath\round{r}[i+2].}
We use similar arguments as for the inner base case.
By induction hypothesis, \cref{eq_safe2} for \round{r}[i] is true.
We now restrict the considered "traces" for the $(i+1)$-th and  $(i+2)$-th quantifier.
The $(i+1)$-th quantifier is universal, so clearly the statement remains true if we restrict the "traces" for this quantifier to "traces" coming from "paths" starting with $v'_{i+1} \in \Succ{v_{i+1}}$.
The $i+2$-th quantifier is existential, hence, there exists some $v'_{i+2} \in \Succ{v_{i+2}}$ such that this quantifier can be restricted to "traces" from "paths" starting with $v'_{i+2}$ and the statement remains true.
This is exactly the statement of \cref{eq_safe2} for \round{r}[i+2].
Recall that the "prophecy" $\safeprophy{i+1}{q_{i+2}}{(v'_0,\ldots,v'_{i+2})}$ is defined as
\begin{equation*}
\begin{split}
    &\{\combine{t_0,t_2,\ldots,t_{i+1}} \mid t_0 \in \traces{\mytsys_{v'_0}}, t_2 \in \traces{\mytsys_{v'_2}},\ldots,t_{i+1} \in \traces{\mytsys_{v'_{i+1}}} \\ 
     & \qquad \text{and there exist } t_{1} \in \traces{\mytsys_{v'_1}}, t_{3} \in \traces{\mytsys_{v'_3}},\ldots, t_{i+2} \in \traces{\mytsys_{v'_{i+2}}} \\
     & \qquad \qquad \text{such that } \combine{t_0,\ldots,t_{i+2}} \in L(\auti{i+2}[q_{i+2}]) \}. 
\end{split}
\end{equation*}
We claim (shown below) that \cref{eq_safe2} for \round{r}[i+2] implies that
\[
\safeprophy{i+1}{q_{i+2}}{(v'_0,\ldots,v'_{i+2})} = \traces{\mytsys_{v'_0}} \times \traces{\mytsys_{v'_2}} \times \cdots \times \traces{\mytsys_{v'_{i+1}}}.
\]
Consequently, 
\[
\combine{\lambda^{\calP}(\rho_0)[r,\infty),\lambda^{\calP}(\rho_2)[r,\infty),\ldots,\lambda^{\calP}(\rho_{i+1})[r,\infty)} \in \safeprophy{i+1}{q_{i+2}}{(v'_0,\ldots,v'_{i+2})}.
\]
In words, "Falsifier" has committed to play such that the remainders of his induced "traces" satisfy the above "prophecy" (which includes information about possible plays for "Verifier-players").
Due to \cref{assump_strategy}, "Falsifier" truthfully indicates this, i.e., 
\[
\psafe{i+1}{(q_1,q_3,\ldots,q_{i+2})}{(v'_0,\ldots,v'_{i+2})} \in \prophecies{v'_{i-1}}.
\]
This is the statement of \cref{eq_proposition}.
Hence, \vplayer{i+2} can move to $v'_{i+2}$ in \round{r}[i+2] which meets the requirements imposed by the strategy construction.

It is left to prove that 
\[
\safeprophy{i+1}{q_{i+2}}{(v'_0,\ldots,v'_{i+2})} = \traces{\mytsys_{v'_0}} \times \traces{\mytsys_{v'_2}} \times \cdots \times \traces{\mytsys_{v'_{i+1}}}.
\]
Towards a contradiction, assume there are $t_0 \in \traces{\mytsys_{v'_0}},t_2\in \traces{\mytsys_{v'_2}},\ldots,t_{i+1}\in \traces{\mytsys_{v'_{i+1}}}$ such that 
\[
\combine{t_0,t_2,\ldots,t_{i+1}} \notin \safeprophy{i+1}{q_{i+2}}{(v'_0,\ldots,v'_{i+2})}.
\]\
\cref{eq_safe2} for \round{r}[\mathmbox{i+2}], by successively restricting the quantifiers, implies that there are $t_1 \in  \traces{\mytsys_{v'_1}},t_3 \in  \traces{\mytsys_{v'_1}},\ldots,t_{i+2} \in  \traces{\mytsys_{v'_{i+2}}}$ such that $\combine{t_0,\ldots,t_{i+2}} \in L(\auti{i+2}[q_{i+2}])$.
This implies that 
\[
\combine{t_0,t_2,\ldots,t_{i+1}} \in \safeprophy{i+1}{q_{i+2}}{(v'_0,\ldots,v'_{i+2})}
\]
which contradicts the assumption.

Now that we have completed the inner induction, we return to the proof of \cref{eq_safe} for \round{r+1}.
We proved that \cref{eq_safe2} is true in \round{r}[k-1], meaning that 
\begin{equation*}
\begin{split}
 & \forall {t}_0 \in \traces{\mytsys_{v'_0}}\  \exists {t}_1 \in \traces{\mytsys_{v'_1}} \cdots
 \forall {t}_{k-2} \in \traces{\mytsys_{v'_{k-2}}}\  \exists {t}_{k-1} \in \traces{\mytsys_{v'_{k-1}}} : \\
 & \qquad \combine{{t}_0,\ldots,{t}_{k-1}} \in L(\auti{k-1}[q_{k-1}])
 \end{split}
\end{equation*}
Let $q'_{k-1} = \delta(q_{k-1},(\orig{{v'_0}},\ldots,\orig{{v'_{k-1}}}))$. We reformulate the previous equation as
\begin{equation}
\label{eq_step}
\begin{split}
 & \forall {t}_0 \in \traces{\mytsys_{v'_0}}\  \exists {t}_1 \in \traces{\mytsys_{v'_1}} \cdots
 \forall {t}_{k-2} \in \traces{\mytsys_{v'_{k-2}}}\  \exists {t}_{k-1} \in \traces{\mytsys_{v'_{k-1}}} : \\ 
 & \qquad \combine{{t}_0[1,\infty),\ldots,{t}_{k-1}[1,\infty)} \in L(\auti{k-1}[q'_{k-1}])
 \end{split}
\end{equation}
In \round{r}[\mathmbox{k-1}], the play is in a position~$((v'_0,\ldots,v'_{k-2},v_{k-1}),q,k-1)$.
As described, \vplayer{k-1} picks a suitable vertex $v'_{k-1} \in \Succ{v_{k-1}}$ and the play moves to position $((v'_0,\ldots,v'_{k-1}),q',0)$ where $q' = \delta_\autb(q,(\lambda^{\calP}({v'_0}),\ldots,\lambda^{\calP}({v'_{k-1}})))$ and is now in \round{r+1}[0].
So \cref{eq_step} clearly implies that \cref{eq_safe} is true in \round{r+1}.
We completed the proof of the outer induction, i.e., in every play that is consistent with the strategies constructed above and that satisfies \cref{assump_strategy}, the "Verifier-players" always use the \nonemptycase-case.

Next, we show that the strategies are consistent with the available observations.
We note that the strategy we constructed for \vplayer{i} is solely based on the information visible to her, i.e., it is based only on the "paths" $\rho_0,\ldots,\rho_i$ and derived information such as information about $\auti{i}$ (processing "traces" $\orig{\rho_0},\ldots,\orig{\rho_i}$) and related "prophecy" information.

It is left to argue that the resulting collection of strategies for the "Verifier-players" is winning.
We review the winning condition: 
Recall that the "parity automaton"~$\autb$ accepts "trace assignments" (stemming from $\tsysmanip$) satisfying $\psimanip$ which is the trace quantifier-free part of $\phimanip$.
The "Verifier-players" win if $\autb$ accepts $\combine{\lambda^{\calP}(\rho_0),\ldots,\lambda^{\calP}(\rho_{k-1})}$.

As mentioned at the beginning of the proof, \cref{assump_strategy} holds.
Recall that $\pi_0,\ldots,\pi_{k-1}$ are the quantified "trace variables" of $\phimanip$.
Under \cref{assump_strategy}, considering the "trace assignment" which maps each $\pi_j$ to $\lambda^{\calP}(\rho_j)$, the premise of $\psimanip$ is true.
Note that "traces" induced by "Verifier-players" do not influence the truth-value of the premise, because the "prophecies" only make predictions about "traces" induced by "Falsifier".
Hence, we have to show that the conclusion is true, i.e., that the above "trace assignment" satisfies $\mypsi$.
Recall that $\auti{k-1}$ checks this.
We have to show that $\auti{k-1}$ accepts $\combine{\orig{\rho_0},\ldots,\orig{\rho_{k-1}}}$.
Playing accordingly to the strategy we constructed, in each \round{r}[k-1], the play is in a position $((v'_0,\ldots,v'_{k-2},v_{k-1}),q,k-1)$ and \vplayer{k-1} has picked some $v'_{k-1} \in \Succ{v_{k-1}}$ such that 
\[
 \psafe{k-2}{q_{k-1}}{(v'_0,\ldots,v'_{k-1})} \in \prophecies{v'_{k-2}},
\]
where $q_{k-1}$ is the state of $\auti{k-1}$ reached on 
\[
\combine{\orig{\rho_0}[0,r),\orig{\rho_1}[0,r),\ldots,\orig{\rho_{k-1}}[0,r)}.
\]
If $L(\auti{k-1}[q_{k-1}]) = \emptyset$, it would not have been possible to indicate that $\safeprophy{k-2}{q_k}{(v'_0,\ldots,v'_{k-1})}$ holds.
Consequently, the state $q_{k-1}$ must be safe, because \auti{k-1} is a "safety automaton" since $\mypsi$ is a "safety" condition (safe simply means able to still accept something).
Thus, the run of $\auti{k-1}$ on $\combine{\orig{\rho_0},\ldots,\orig{\rho_{k-1}}}$ is accepting which implies that the run of $\autb$ on  $\combine{\lambda^{\calP}(\rho_0),\ldots,\lambda^{\calP}(\rho_{k-1})}$ is accepting.
\end{proof}

Combining \cref{thm_soundprophy} and \cref{thm_safetycompleteness}, we obtain our main result of this section for formulas~$\myphi$ where the maximal trace quantifier-free subformula~$\mypsi$ is a "safety" property.

\begin{theorem}
Let $\Xi$, $\calP$ be as in \cref{def_safetymanip}, and let $\mypsi$ be a "safety" property. The coalition of "Verifier-players" has a winning collection of strategies for $\game{\tsysmanip, \phimanip}$  if and only if $\mytsys \models \myphi$.
\end{theorem}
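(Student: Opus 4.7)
The statement to be proved is the conjunction of the two directions established separately in \cref{thm_soundprophy} and \cref{thm_safetycompleteness}, so the proof plan is essentially one of assembly rather than new technical work.

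For the \emph{only if} direction, assume that the coalition of "Verifier-players" has a winning collection of strategies in $\game{\tsysmanip, \phimanip}$. To conclude $\mytsys \models \myphi$ it suffices to invoke \cref{thm_soundprophy}. The hypothesis of that lemma requires $\Xi$ and $\calP$ to meet the disjointness and labelling requirements of \cref{def_propmani}, so I would first check those: the atomic propositions in $\ap$, the prophecy variables $\psafe{i}{q}{\bar{v}}$ collected into each $\pp_i$, and the markers $\markprop_i$ are kept disjoint by \cref{def_safetymanip,def_systemmanip}; each $\pformula{i}{q}{\bar{v}} \in \Xi_i$ expresses a language over tuples indexed by the even $j \le i$ and is therefore, by \cref{prop_automataconstruction}, definable by a \qptl-sentence using only "trace variables" $\pi_j$ with even $j \le i$; and \cref{def_safetymanip} pairs each sentence with its own "prophecy variable", so $\size{\Xi_i} = \size{\pp_i}$. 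With these checks in place the conclusion follows directly.

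For the \emph{if} direction, assume $\mytsys \models \myphi$. Since the additional standing hypothesis of the theorem is that $\mypsi$ is a "safety" property, the hypotheses of \cref{thm_safetycompleteness} are exactly met with the very choice of $\Xi$ and $\calP$ from \cref{def_safetymanip}, and that lemma directly yields a winning collection of strategies for the "Verifier-players" in $\game{\tsysmanip, \phimanip}$.

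Since the real technical content, including the nested induction that tracks how the \nonemptycase-case is perpetually available under \cref{assump_strategy} and the final argument that the safety of the reached states of $\auti{k-1}$ translates into acceptance by $\autb$, is already contained in the proof of \cref{thm_safetycompleteness}, I do not expect any obstacle in the assembly step itself. The only small diligence point is the bookkeeping check that the $\Xi$ and $\calP$ of \cref{def_safetymanip} satisfy the formal requirements of \cref{def_propmani}, which I would spell out explicitly in one or two sentences to make the invocation of \cref{thm_soundprophy} fully rigorous.
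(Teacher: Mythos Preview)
Your proposal is correct and mirrors the paper's own approach: the paper simply states that the theorem follows by combining \cref{thm_soundprophy} and \cref{thm_safetycompleteness}, without further argument. Your additional bookkeeping check that the $\Xi$ and $\calP$ from \cref{def_safetymanip} satisfy the requirements of \cref{def_propmani} is a reasonable bit of diligence that the paper leaves implicit.
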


%%%%%%%%%%%%%%%%%%%%%%%%%%%%%%%%%%%%%%%
%%%%%%%%%%%%%%%%%%%%%%%%%%%%%%%%%%%%%%%
%%%%%%%%%%%%%%%%%%%%%%%%%%%%%%%%%%%%%%%
\section{\texorpdfstring{Complete Prophecies for {\setBold[0.5]$\omega$}-Regular Properties}{Complete Prophecies for Omega-Regular Properties}}
\label{sec_rabin}

In this section, we show how to construct complete prophecies for trace quantifier-free $\psi$ beyond the "safety" fragment.
In the "safety" case, "prophecies" indicate moves which prevent the "Verifier-players" from losing. 
In the case of "safety", not losing equals winning.
However for general $\omega$-regular objectives, this is not enough. It is also necessary to (again and again) make progress towards satisfying the acceptance condition.

In this section, we assume that the automata~$\auti{i}$ are \intro{Rabin automata}.
The class of languages recognized by (deterministic respectively nondeterministic) "Rabin automata" is the class of $\omega$-regular languages.
A (deterministic) "Rabin automaton" is a (deterministic) $\omega$-automaton whose acceptance condition is given as a set~$\set{(B_1,F_1),\ldots,(B_m,F_m)}$ of pairs of sets~$B_i, F_i$ of states. 
A run is accepting if there exists an $i$ such that the run visits states in $B_i$ only finitely many times and states in $F_i$ infinitely many times.

\subsection{One-Pair Rabin Automata}
\label{subsec_onepairrabin}
First, we present our construction for one-pair "Rabin automata", then show how to generalize to "Rabin automata" with an arbitrary number of pairs in \cref{subsec_rabin}.
Here, we assume that for \emph{each} $i \in \iodd$, the automaton $\auti{i}$ is a one-pair "Rabin automaton", not only $\auti{k}$.
We always denote the single pair as $(B,F)$, it will be clear from the context to which automaton it belongs.
Intuitively, making progress towards winning in a one-pair "Rabin automaton" means avoiding states in $B$ and visiting states in $F$.

Let $\kappa$ be an infinite run of such an automaton.
\AP We define $\intro*{\lastb{\kappa}}$ as $\bot$ if $\kappa$ never visits~$B$, as $i$ if $\kappa(i)$ is the last visit of $\kappa$ in $B$, and as $\infty$ if $\kappa$ visits $B$ infinitely many times (meaning there is no last visit).
\AP Furthermore, we define $\intro*{\firstf{\kappa}}$ as $i$ if $\kappa(i)$ is the first visit of $\kappa$ in $F$ and as $\infty$ if $\kappa$ never visits $F$.
\AP Finally, we define the value $\intro*{\val{\kappa}}$ of a run $\kappa$ as $\firstf{\kappa}$ if $\lastb{\kappa} = \bot$ and as $\max\set{\lastb{\kappa},\firstf{\kappa}}$ otherwise.
Given two runs $\kappa,\kappa'$, we say that~$\kappa$ \myquot{makes progress faster} than $\kappa'$ if $\val{\kappa} < \val{\kappa'}$.

Before we can define our new "prophecies", we need additional notation that gives us the value of a best run (over several "traces") if we fix all but the last "trace".
\AP For each $i \in \iodd$, each vertex $v$ of~$\mytsys$, each state $q$ of $\auti{i}$, and traces $t_0,t_1,\ldots,t_{i-1} \in \traces{\mytsys}$, we define
\[
\intro*{\opt{i}{q,v,t_0,\ldots,t_{i-1}}} = \min\nolimits_{t_i \in \traces{\mytsys_{\Succ{v}}}} \val{\kappa_{i}},
\]
where $\kappa_{i}$ is the run of $\auti{i}[q]$ on $\combine{t_0,\ldots,t_{i-1},t_i}$.

We are now ready to introduce our "prophecies".

\begin{definition}[Prophecy construction]
\AP 
For each $i \in \ieven$, vectors $\bar{u} = (v_0,v_1,\ldots,v_{i+1})$ and\newline $\bar{v} = (v'_0,v'_1,\ldots,v'_{i+1})$ of vertices such that $v'_{j} \in \Succ{v_{j}}$ for $j \leq i+1$, and each vector $\bar{q} = (q_1,q_3,\ldots,q_{i+1})$ of states~$q_j$ of $\auti{j}$, we define
\begin{equation*}
\begin{split}
 & \intro*{\progressprophy{i}{\bar{q}}{\bar{u},\bar{v}}} = 
 \{\combine{t_0,t_2,\ldots,t_{i}} \mid t_0 \in \traces{\mytsys_{v'_0}}, t_2 \in \traces{\mytsys_{v'_2}},\ldots,t_{i} \in \traces{\mytsys_{v'_{i}}}, \text { and} \\ 
 & \quad \text{there exist } t_{1} \in \traces{\mytsys_{v'_1}}, t_{3} \in \traces{\mytsys_{v'_3}},\ldots, t_{i+1} \in \traces{\mytsys_{v'_{i+1}}} \text{ such that } \\
 & \quad \quad \text{the run } \kappa_j \text{ of } \auti{j}[q_{j}] \text{ on } \combine{t_0,\ldots,t_{j}} \text{ is accepting and satisfies}\\
 & \quad\quad\quad \val{\kappa_j} = \opt{j}{q_{j},v_j,t_0,\ldots,t_{j-1}} \text{ for all odd }j \leq i+1\}. 
\end{split}
\end{equation*}
\end{definition}

Intuitively, the "prophecy" \progressprophy{i}{\bar{q}}{\bar{u},\bar{v}} indicates that it is safe for \vplayer{i+1} to move from $v_{i+1}$ to $v'_{i+1}$ as before, and additionally, among all successors of $v_{i+1}$, picking $v'_{i+1}$ allows \vplayer{i+1} to make the most progress.
This "prophecy" is intended to be used, if for odd $j < i+1$, \vplayer{j} already picked the move $v_j$ to $v'_j$ which was optimal.
In particular, the "prophecy" repeats the "prophecies" made (in the same round) for \vplayer{j} with odd $j < i+1$.

These "prophecies" are $\omega$-regular.

\begin{lemma}[name={},restate=Prophregular]
\label{proph_regular}
For each $i \in \ieven$, each vector~$\bar{q}$ of states and all vectors~$\bar{u},\bar{v}$ such that the $\ell$-th vertex of $\bar{v}$ is a successor of the $\ell$-th vertex of $\bar{u}$, the set \progressprophy{i}{\bar{q}}{\bar{u},\bar{v}} is $\omega$-regular.  
\end{lemma}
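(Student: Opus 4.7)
The plan is to show $\omega$-regularity of $\progressprophy{i}{\bar{q}}{\bar{u},\bar{v}}$ by exhibiting it as an existential projection of a Boolean combination of $\omega$-regular languages, invoking closure of $\omega$-regular languages under intersection, complement, and projection (equivalently, by \cref{prop_automataconstruction}(2), by a \qptl sentence). Unfolding the definition, $\progressprophy{i}{\bar{q}}{\bar{u},\bar{v}}$ is the projection onto the even-indexed coordinates $(t_0,t_2,\ldots,t_i)$ of the set $L$ of tuples $\combine{t_0,\ldots,t_{i+1}}$ over $(\pow{\ap})^{i+2}$ satisfying three families of conditions: (i) the trace-membership constraints $t_j \in \traces{\mytsys_{v'_j}}$ for every $j \le i+1$; (ii) for every odd $j \le i+1$, the run $\kappa_j$ of $\auti{j}[q_j]$ on $\combine{t_0,\ldots,t_j}$ is accepting; and (iii) for every odd $j \le i+1$, $\val{\kappa_j} = \opt{j}{q_j, v_j, t_0,\ldots,t_{j-1}}$. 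Conditions~(i) and~(ii) are immediately $\omega$-regular: (i) is recognized by $\mytsys_{v'_j}$ itself, and (ii) is the preimage of $L(\auti{j}[q_j])$ under the coordinate-dropping projection.

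The main obstacle is condition~(iii), which I would reformulate as a universal statement: $\val{\kappa_j} = \opt{j}{q_j, v_j, t_0,\ldots,t_{j-1}}$ holds iff for every $t_j'' \in \traces{\mytsys_{\Succ{v_j}}}$, the run $\kappa_j''$ of $\auti{j}[q_j]$ on $\combine{t_0,\ldots,t_{j-1},t_j''}$ satisfies $\val{\kappa_j} \le \val{\kappa_j''}$. I would then show that the binary relation $\val{\kappa}\le\val{\kappa'}$ between runs of $\auti{j}[q_j]$ is $\omega$-regular by encoding it as the disjunction of ``$\kappa'$ is rejecting'' (visits $B$ infinitely often or $F$ only finitely often, both $\omega$-regular) and the existential ``there is a position $n$ with $n=\val{\kappa'}$ and $\val{\kappa}\le n$''. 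Guessing $n$ with a fresh propositional variable true at a unique position, ``$n=\val{\kappa'}$'' unfolds into the $\omega$-regular conjunction of: some $F$-visit of $\kappa'$ at a position~$\le n$, no $B$-visit of $\kappa'$ strictly after~$n$, and either $\kappa'(n)\in B$ or ($\kappa'(n)\in F$ with no earlier $F$-visit). The condition ``$\val{\kappa}\le n$'' unfolds analogously into some $F$-visit of $\kappa$ by $n$ together with no $B$-visit of $\kappa$ strictly after~$n$. Existentially quantifying the marker produces an $\omega$-regular characterization of $\val{\kappa}\le\val{\kappa'}$.

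Once $\val{\kappa_j}\le\val{\kappa_j''}$ is in hand as an $\omega$-regular predicate over $\combine{t_0,\ldots,t_j,t_j''}$, the universal quantification over $t_j''$ in the reformulation of (iii) preserves $\omega$-regularity by the standard complement–project–complement argument, making (iii) $\omega$-regular. Intersecting (i), (ii), and (iii) shows $L$ is $\omega$-regular, and existentially projecting $L$ onto the even-indexed coordinates yields $\progressprophy{i}{\bar{q}}{\bar{u},\bar{v}}$. The only delicate step throughout is the symbolic treatment of the value function~$\val{}$, which is tied to the one-pair Rabin shape of $\auti{j}$; everything else is routine application of the closure of $\omega$-regular languages under Boolean operations and projections, which is also the reason the paper first carries out the construction for one pair before lifting to general Rabin in \cref{subsec_rabin}.
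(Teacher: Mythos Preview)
Your approach is essentially the same as the paper's: both arguments show that the binary value-comparison relation between runs of a one-pair Rabin automaton is $\omega$-regular, then obtain optimality via projection and complementation, and finally project away the odd-indexed traces. The paper phrases this as regularity of $\{\combine{t,t'} \mid \val{\kappa} > \val{\kappa'}\}$ via a direct case analysis on visits to $B$ and $F$, and builds the automata~$\autb_1,\autb_3,\ldots$ iteratively; you phrase it as regularity of $\val{\kappa} \le \val{\kappa'}$ via a guessed position marker and intersect all constraints at once. These are interchangeable presentations of the same idea.

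One small slip: in your disjunction encoding $\val{\kappa} \le \val{\kappa'}$, the first disjunct should be ``$\val{\kappa'} = \infty$'' (i.e., $\kappa'$ visits $B$ infinitely often or \emph{never} visits $F$), not ``$\kappa'$ is rejecting''. A rejecting run can have finite value---for instance, if it visits $F$ exactly once and never visits $B$---and such a run may have strictly smaller value than an accepting $\kappa$; your first disjunct would then wrongly declare $\val{\kappa} \le \val{\kappa'}$. The fix is immediate (replace ``$F$ only finitely often'' by ``$F$ never'') and does not affect the structure of your argument.
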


\begin{proof}
For ease of presentation, we speak of value of a "trace" $t$ when we actually mean the value of the unique run $\kappa$ on $t$.
We also sometimes write $\val{t}$ instead of $\val{\kappa}$. 
In both cases, the automaton we consider will be clear from context.

Let $\mathcal{R}$ be a one-pair "Rabin automaton".
One can construct an $\omega$-automaton that recognizes 
\[
\set{\combine{t,t'} \mid \val{\kappa} > \val{\kappa'}, \text{ where } \kappa \text{ resp.\ } \kappa' \text{ is the run of } \mathcal{R} \text{ on } t \text{ resp.\ }t'}
.\]
This automaton checks whether one of the following cases holds:
\begin{enumerate}
    \item $\kappa$ visits $B$ infinitely many times (hence, $\val{\kappa} = \infty$), $\kappa'$ visits $B$ finitely many times and $F$ at least once (hence, $\val{\kappa'} < \infty$).
    \item $\kappa$ and $\kappa'$ visit $B$ finitely many times and $F$ at least once, and
    \begin{enumerate}
        \item $\lastb{\kappa},\lastb{\kappa'},\firstf{\kappa'} < \firstf{\kappa}$, or \item $\firstf{\kappa},\firstf{\kappa'},\lastb{\kappa'} < \lastb{\kappa}$
    \end{enumerate}
    \item $\kappa$ and $\kappa'$ do not visit $B$, and $\firstf{\kappa'} < \firstf{\kappa}$. 
\end{enumerate}

By projection to the first component of the set, one obtains an automaton that recognizes "traces" $t$ such that there is some "trace" $t'$ with $\val{t} > \val{t'}$.
The complement of this set (in intersection with valid "traces") yields "traces" that have an optimal value for $\mathcal R$.

Using these techniques, one can construct an automaton $\autb_1$ that accepts those $\combine{t_0,t_1}$ where $\combine{t_0,t_1} \in L(\auti{1})$ such that $t_1$ is optimal in combination with $t_0$ for $\auti{1}$.
Again, using these techniques, one can construct an automaton $\autb_3$ that accepts those $\combine{t_0,t_1,t_2,t_3}$ where $\combine{t_0,t_1} \in L(\autb_1)$, $\combine{t_0,t_1,t_2,t_3} \in L(\auti{3})$ and $t_3$ is optimal in combination with $t_0,t_1,t_2$ for $\auti{3}$.

Iterating this, one arrives at an automaton $\autb_{i}$ that accepts those $\combine{t_0,t_1,\ldots,t_i}$.
Projecting away the "traces" $t_j$ for odd $j \leq i$, one obtains an automaton that accepts those $\combine{t_0,t_2,\ldots,t_{i-1}}$ required by the "prophecy" $\progressprophy{i-1}{\bar{q}}{\bar{u},\bar{v}}$ (for some $\bar{q},\bar{u},\bar{v}$).
\end{proof}

We set up the manipulated system and formula, then state the completeness result.

\begin{definition}
\label{def_manip}
 We define  $\Xi = (\Xi_i)_{i \in \ieven}$ and $\calP = (\pp_i)_{i \in \ieven}$.
 Let $\Xi_{i}$ be the set of \qptl formulas that contains sentences $\pformula{i}{\bar{q}}{\bar{u},\bar{v}}$ for each vector~$\bar{q}$ of states and vectors $\bar{u},\bar{v}$ such that the $\ell$-th vertex of $\bar{v}$ is a successor of the $\ell$-th vertex of $\bar{u}$.
 The sentence $\pformula{i}{\bar{q}}{\bar{u},\bar{v}}$ expresses the "prophecy"~$\progressprophy{i}{\bar{q}}{\bar{u},\bar{v}}$ using only even "trace variables" $\pi_j$ with $j \leq i$.
 
 \AP Moreover, let~$\pp_{i}$ be such that it contains the "prophecy variable" $\intro*{\pprogress{i}{\bar{q}}{\bar{u},\bar{v}}}$ for each vector~$\bar{q}$ of states and vectors $\bar{u},\bar{v}$ such that the $\ell$-th vertex of $\bar{v}$ is a successor of the $\ell$-th vertex of $\bar{u}$.
 The "prophecy variable" $\pprogress{i}{\bar{q}}{\bar{u},\bar{v}}$ corresponds to $\pformula{i}{\bar{q}}{\bar{u},\bar{v}}$.
\end{definition}

\begin{lemma}[name={},restate=Completeness]
\label{thm_completness}
 Assume $\mytsys \models \myphi$ and each $\auti{i}$ is a one-pair "Rabin automaton".
 Then the coalition of "Verifier-players" has a winning collection of strategies in $\game{\tsysmanip, \phimanip}$ with $\Xi$ and~$\calP$ as in \cref{def_manip}.
\end{lemma}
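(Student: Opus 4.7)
The plan is to adapt the structure of the proof of \cref{thm_safetycompleteness} to the one-pair Rabin setting. First, I define the Verifier strategy: in \round{r}[i] for $i \in \iodd$, \vplayer{i} picks some $v'_i \in \Succ{v_i}$ such that $\pprogress{i-1}{(q_1, q_3, \ldots, q_i)}{\bar{u}, \bar{v}} \in \prophecies{v'_{i-1}}$, where each $q_j$ is the state reached by $\auti{j}$ on the round-$r$ prefixes of the played traces, $\bar{u}$ records the source vertices of the current round, and $\bar{v}$ is $\bar{u}$ with its $i$-th entry replaced by the candidate $v'_i$; if no such move exists, pick arbitrarily. As in the safety case, this strategy depends only on information observable to \vplayer{i}, so it respects the hierarchical-information structure of the game.

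Next, I carry out the nested induction (outer over rounds, inner over players within a round) analogous to \cref{eq_safe,eq_safe2}, but with a strengthened invariant reflecting optimality: at every \round{r}[0], from the current game state there exist witnesses $t_0, t_1, \ldots, t_{k-1}$ for the quantifier alternation such that the run of each $\auti{j}$ (odd $j$) is accepting and realizes the optimum $\opt{j}{\cdot}$. The base case follows from $\mytsys \models \myphi$, since among the nonempty $\omega$-regular language of accepting continuations the minimum of $\val{\cdot}$ is attained. The inductive step restricts the outermost quantifier to the chosen successor exactly as in the safety proof, and \cref{assump_strategy} then forces the required prophecy variable to appear in $\prophecies{v'_{i-1}}$, so the strategy is well-defined.

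The truly new ingredient is the winning argument for the Rabin acceptance of the actual run $\kappa$ of $\auti{k-1}$ on the produced traces. The guiding invariant is that at every round $r$, the current state $q^{(r)}_{k-1}$ admits an accepting (and hence finite-value) continuation; this is preserved because any accepting continuation, shifted by one step, remains accepting from the newly-reached state. Since $\auti{k-1}$ has finitely many states, the optimal values $V_r$ from $q^{(r)}_{k-1}$ are uniformly bounded by some $V_{\max}$. From this I argue via a rank-style decrease that between consecutive visits of $\kappa$ to $F$ at most $V_{\max}$ rounds elapse, yielding infinitely many $F$-visits. A dual argument using the $\lastb$-component of the value controls the $B$-visits: invoking finiteness of the state space again, after sufficiently many $F$-visits the reachable states lie in a region where the optimal continuation avoids $B$ entirely, so $\lastb{\kappa}$ is globally finite.

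The main obstacle I anticipate is this final winning argument: unlike in safety, where avoiding a single sink at each round trivially propagates to the full play, here the Verifier must realize both a liveness property for $F$ and a safety-like bound for $B$, while committing to one move at a time against a state that changes with each round. The prophecies are re-asserted adaptively each round, so making rigorous that the local optimal commitments chain into a coherent globally accepting trajectory --- in particular, that $\lastb{\kappa}$ is finite globally and not merely on each suffix --- is the essential technical step beyond the safety proof.
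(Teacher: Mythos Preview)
Your strategy definition and the reduction of the ``nonempty-case'' argument to the safety proof are in line with the paper's approach (the paper in fact observes that $\progressprophy{i}{\bar q}{\bar u,\bar v}\subseteq\safeprophy{i}{q_{i+1}}{\bar v}$ and reuses the safety induction verbatim). The difficulty you correctly flag is the acceptance argument, and there your sketch contains a genuine gap.

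Your $B$-finiteness argument (``after sufficiently many $F$-visits the reachable states lie in a region where the optimal continuation avoids $B$ entirely'') does not go through. Knowing that from every state reached along the play there \emph{exists} a continuation with finite $\val{\cdot}$ does not prevent the actually played run~$\kappa_{k-1}$ from visiting $B$ infinitely often: the play could repeatedly return to a state whose optimal continuation still visits $B$ once more, and nothing in your invariant forces the chosen one-step moves to stitch together into a globally optimal run. Finiteness of the state space does not help here, because optimality is measured relative to the (infinite) Falsifier suffixes, not just the automaton state. A second, related gap is that the prophecy~$\progressprophy{i-1}{\bar q}{\bar u,\bar v}$ demands that the \emph{already played} Verifier traces~$t_1,t_3,\ldots,t_{i-2}$ be optimal for the corresponding~$\auti{j}$; your existential invariant (``there exist witnesses~$t_0,\ldots,t_{k-1}$'') is not strong enough to guarantee that Falsifier is obliged to set the relevant prophecy variable at level~$i$ once the lower Verifier players have committed to concrete moves.

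The paper closes both gaps with a single stronger invariant, proved by induction on $i\in\iodd$ (not on rounds): for every $r$, the run~$\kappa_i$ of $\auti{i}$ on the \emph{actually played} traces satisfies
\[
\val{\kappa_i[r,\infty)}=\opt{i}{\kappa_i(r),\rho_i(r),t_0[r,\infty),\ldots,t_{i-1}[r,\infty)}.
\]
This says the played trajectory realises the optimum at every step. Instantiating at $r=0$ gives $\val{\kappa_i}<\infty$, hence $\lastb{\kappa_i}<\infty$ directly---no region argument needed. Moreover, this equation for odd~$j<i$ is exactly the hypothesis required by the prophecy definition at level~$i$, so the induction on~$i$ is what justifies that the higher prophecies are truthfully set. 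The one-step progress fact you need (if $\kappa_i(r)\notin F$ then the optimum strictly decreases) is isolated as a separate lemma; together with the optimality equation it yields infinitely many $F$-visits by well-foundedness, without appealing to a uniform bound~$V_{\max}$.
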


Our proof of this result generalizes the one for the "safety" case (see \cref{thm_safetycompleteness}), using the same notation introduced there:
We define a collection of strategies for the "Verifier-players" and then show, that $\mytsys \models \myphi$ implies that this collection is winning in $\game{\tsysmanip, \phimanip}$.

Assume the play is in \round{r}[i] for $r \geq 0$ and $i \in \iodd$.
Thus, it is in a position
\[
((v'_0,\ldots,v'_{i-1},v_i,\ldots,v_{k-1}),q,i)
\]
and \vplayer{i} has to move.
We review some information \vplayer{i} can use to base her choice on.
In \round{r}, \vplayer{i} has access to the first $r$ letters of the "traces" $\orig{\rho_0},\orig{\rho_1},\ldots,\orig{\rho_i}$ induced by the play\textsuperscript{\ref{foot_strategy}}.
Let $q_{j}$ be the state that $\auti{j}$ has reached on $\combine{\orig{\rho_0}[0,r),\orig{\rho_1}[0,r),\ldots,\orig{\rho_j}[0,r)}$ for odd $j \leq i$.
Also, let $((v_0,\ldots,v_{i-1},v_i,\ldots,v_{k-1}),q,0)$ be the position reached in \round{r}[0].

\begin{definition}[Strategy construction]
\label{def_strategy}
\AP Let
\[\possmoves{i}{\bar{u},\bar{v},\bar{q}} = \set{ v'_i \in \Succ{v_i} \mid \pprogress{i-1}{\bar{q}}{\bar{u},\bar{w}} \in \prophecies{v'_{i-1}} }\]
where $\bar{u} = (v_0,\ldots,v_i)$, $\bar{v} = (v_0',\ldots,v'_{i-1},v_i)$, $\bar{w} = (v_0',\ldots,v'_{i-1},v'_i)$ and $\bar{q} = (q_1,q_3,\ldots,q_{i})$.

If $\possmoves{i}{\bar{u},\bar{v},\bar{q}}$ is nonempty, then \vplayer{i} can move to any $v'_i$ in the set (the \myquot{\nonemptycase-case}).
If it is empty, then \vplayer{i} can move to any $v'_i \in \Succ{v_i}$ (the \myquot{\emptycase-case}).
\end{definition}

Now, \cref{thm_completness} can be proven by generalizing the proof of \cref{thm_safetycompleteness}.

We begin by showing that if the "Verifier-players" use the strategies defined in \cref{def_strategy}, then the "Verifier-players" have always used the \nonemptycase-case.
This is under the assumption that "Falsifier" adheres to \cref{assump_strategy}, that is, he picks "paths" that start in the intended sub-parts of the manipulated "transition system", and furthermore, he always truthfully indicates which "prophecies" hold.

\begin{lemma}
\label{lemma_safetytoprogress}
    Let $\alpha$ be a play in $\game{\tsysmanip,\phimanip}$ that is consistent with the strategies constructed in \cref{def_strategy} and that satisfies \cref{assump_strategy}, then the "Verifier-players" have always used the \nonemptycase-case to pick their moves.
\end{lemma}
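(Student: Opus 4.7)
The plan is to mimic exactly the nested induction in the proof of \cref{thm_safetycompleteness}, with an outer induction on the round number $r \geq 0$ and an inner induction on the existential index $i \in \iodd$. The only difference is that the loop invariant must be strengthened from a pure acceptance claim to one that additionally witnesses joint optimality against $\val{\cdot}$, mirroring the body of $\progressprophy{\cdot}{\cdot}{\cdot,\cdot}$. Concretely, at the start of \round{r}[i] with position $((v'_0,\ldots,v'_{i-1},v_i,\ldots,v_{k-1}),q,i)$ and with $q_\ell$ denoting the state of $\auti{\ell}$ after the first $r$ letters for each odd $\ell \leq i$, I would maintain the following refinement of \cref{eq_safe2}: there exists $v'_i \in \Succ{v_i}$ such that the alternating $\forall\exists$ statement of \cref{eq_safe2} holds, and moreover all existential traces in it can be chosen so that $\val{\kappa_\ell} = \opt{\ell}{q_\ell,v_\ell,t_0,\ldots,t_{\ell-1}}$ for every odd $\ell \leq k-1$, where $\kappa_\ell$ is the run of $\auti{\ell}[q_\ell]$ on the corresponding prefix.

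The outer base case at \round{0}[0] follows from $\mytsys \models \myphi$: given any choice for the universal variables, a witnessing existential selection exists, and among such selections one may iterate the $\min$'s over $\nats \cup \set{\infty}$ to obtain joint optimality for every odd index. The inner step from \round{r}[i] to \round{r}[i+2] then copies the safety argument verbatim: narrowing the $(i{+}1)$-th (universal) quantifier to $\Succ{v_{i+1}}$ preserves the statement, and there is some $v'_{i+2} \in \Succ{v_{i+2}}$ realizing the next existential clause together with joint optimality. The key identity, proved as in \cref{thm_safetycompleteness} but with $\progressprophy{\cdot}{\cdot}{\cdot,\cdot}$ in place of $\safeprophy{\cdot}{\cdot}{\cdot}$, is that the strengthened invariant at \round{r}[i+2] asserts $\progressprophy{i+1}{\bar{q}}{\bar{u},\bar{w}} = \traces{\mytsys_{v'_0}} \times \traces{\mytsys_{v'_2}} \times \cdots \times \traces{\mytsys_{v'_{i+1}}}$ (product over even indices only); this equality follows by the same proof-by-contradiction as the safety case, because any universal trace falling outside would refute the invariant. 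Then by \cref{assump_strategy} the prophecy variable $\pprogress{i+1}{\bar{q}}{\bar{u},\bar{w}}$ has been set at the corresponding position of $\rho_i^\alpha$, so $\possmoves{i+2}{\bar{u},\bar{v},\bar{q}}$ is nonempty and \vplayer{i+2} uses the \nonemptycase-case, as required by \cref{def_strategy}.

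The main obstacle I anticipate is ensuring the joint optimality clause is preserved when \vplayer{i} picks an arbitrary $v'_i \in \possmoves{i}{\bar{u},\bar{v},\bar{q}}$ rather than an optimal successor extracted from the invariant. This is resolved by the observation that $v'_i \in \possmoves{i}{\cdot}$ holds precisely because the corresponding prophecy variable is true, which by \cref{assump_strategy} forces the semantic $\progressprophy{\cdot}{\cdot}{\cdot,\cdot}$ to hold on the actual play; the body of this prophecy explicitly asserts the existence of jointly optimal continuations starting at $v'_i$, re-establishing the invariant regardless of which particular $v'_i$ \vplayer{i} selects. A secondary subtlety is the bookkeeping of the $q_\ell$ across rounds: they are always the states reached after exactly the first $r$ letters, so the reference points of $\opt{\ell}{q_\ell,v_\ell,\cdot}$ in the invariant and in the prophecy body always align, which is precisely how $q_\ell$ is defined in the setup preceding \cref{def_strategy}.
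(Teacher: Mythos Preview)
Your plan contains a genuine gap: the claimed identity
\[
\progressprophy{i+1}{\bar{q}}{\bar{u},\bar{w}} \;=\; \traces{\mytsys_{v'_0}} \times \traces{\mytsys_{v'_2}} \times \cdots \times \traces{\mytsys_{v'_{i+1}}}
\]
is false in general, and with it the strengthened invariant you propose cannot be maintained. The progress prophecy fixes \emph{one} successor~$v'_j$ for each odd $j\le i+2$ and demands that some trace starting there realizes $\opt{j}{q_j,v_j,t_0,\ldots,t_{j-1}}$, the minimum over \emph{all} successors of $v_j$. But which successor attains that minimum depends on the universal inputs $t_0,t_2,\ldots$: for one choice of Falsifier traces the optimum may only be reachable via successor~$u$, for another only via~$u'\neq u$. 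Hence no single $\bar{w}$ makes the prophecy cover every tuple of universal traces, and your proof-by-contradiction (copied from the safety case) breaks precisely at the optimality clause that is absent from $\safeprophy{\cdot}{\cdot}{\cdot}$.

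The paper avoids this by \emph{not} strengthening \cref{eq_safe2}. It first observes that $\progressprophy{j-1}{\bar{s}}{\bar{x},\bar{y}}\subseteq\safeprophy{j-1}{s}{y}$, so the entire nested induction establishing \cref{eq_safe} and \cref{eq_safe2} from the safety proof carries over unchanged (that induction is insensitive to the acceptance condition). The new work is then done only for the \emph{actual} Falsifier traces $\orig{\rho_0}[r,\infty),\orig{\rho_2}[r,\infty),\ldots$: the inductive hypothesis (membership in $\progressprophy{i-3}{\cdot}{\cdot,\cdot}$) supplies jointly optimal witnesses $t'_1,\ldots,t'_{i-2}$ for the lower odd indices, and \cref{eq_safe2} (instantiated at those specific traces) yields some accepting continuation for index~$i$, among which an optimal one is then selected. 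This gives membership of the played tuple in some $\progressprophy{i-1}{\bar{q}}{\bar{u},\bar{w}}$, hence $\possmoves{i}{\cdot}\neq\emptyset$. The argument never needs the prophecy to be the full product; it only needs the played tuple to lie in the prophecy for \emph{some} choice of~$v'_i$, which is a much weaker (and true) statement.
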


\begin{proof}
Let $\rho_0,\ldots,\rho_{k-1}$ be the "paths" induced by $\alpha$.
For each $r \geq 0$, and each $i\in\iodd$, we show inductively that \vplayer{i} can use the \nonemptycase-case in \round{r}[i] to pick her move. 
Our induction hypothesis is that in each previous round of a "Verifier-player" the \nonemptycase-case was used.

We assume that in \round{r}[0] the play was in a position of the form 
\[
((v_0,\ldots,v_{i-1},v_i,\ldots,v_{k-1}),q,0),
\]
and in \round{r}[i] the play is in a position of the form
\[
((v'_0,\ldots,v'_{i-1},v_i,\ldots,v_{k-1}),q,i).
\]
Furthermore, let $q_{j}$ be the state that $\auti{j}$ has reached on 
\[
\combine{\orig{\rho_0}[0,r),\orig{\rho_1}[0,r),\ldots,\orig{\rho_j}[0,r)}
\]
for odd $j \leq i$.

Let $\bar{u} = (v_0,\ldots,v_i)$, $\bar{v} = (v'_0,\ldots,v'_{i-1},v_i)$ and $\bar{q} = (q_1,q_3,\ldots,q_{i})$.
Our goal is to show that there is some $v'_i \in \Succ{v_i}$ such that for $\bar{w} = (v'_0,\ldots,v'_{i-1},v'_i)$ holds that 
\begin{equation}
\label{eq_nonempty}
\combine{\orig{\rho_0}[0,r),\orig{\rho_2}[0,r),\ldots,\orig{\rho_{i-1}}[0,r)} \in \progressprophy{i-1}{\bar{q}}{\bar{u},\bar{w}}
\end{equation}
which implies that $\pprogress{i-1}{\bar{q}}{\bar{u},\bar{w}} \in \prophecies{v'_{i-1}}$ (because "Falsifier" adheres to \cref{assump_strategy}).
By definition, then $v'_i \in \possmoves{i}{\bar{u},\bar{v},\bar{q}}$, and \vplayer{i} is in the \nonemptycase-case.

We first make an observation about the connection between "prophecies" used in the "safety" case and our new "prophecies":

For each $j \in \iodd$, vertex vectors $\bar{x},\bar{y}$ such that the $\ell$-th vertex of $\bar{y}$ is a successor vertex of the $\ell$-th vertex of $\bar{x}$, each state vector $\bar{s}$, if
\[
\combine{t_0,t_2,\ldots,t_{j-1}} \in \progressprophy{j-1}{\bar{s}}{\bar{x},\bar{y}},
\]
then
\[
\combine{t_0,t_2,\ldots,t_{j-1}} \in \safeprophy{j-1}{s}{y},
\]
where $s$ is the last state of $\bar{s}$ and $y$ is the last vertex of $\bar{y}$.
Note that the definition of the "safety" "prophecies"~$\safeprophy{j-1}{s}{y}$ does \emph{not} depend on the automaton inducing the "prophecies" being a "safety automaton":
They just ensure that the remaining suffixes induced by the play can be accepted, if the "Verifier-players" play accordingly. 
Similarly, the induction at the beginning of the proof of \cref{thm_safetycompleteness}, showing that the "Verifier-players" can always use the \nonemptycase-case, is also independent of the acceptance condition.

As a consequence, this allows us to reuse (without reproving) \cref{eq_safe} and \cref{eq_safe2}.
To recap, we want to prove \cref{eq_nonempty} for \round{r}[i].
Therefore, we first prove that \cref{eq_safe} for \round{r}[0] implies \cref{eq_nonempty} for \round{r}[1].
Then, we prove that \cref{eq_safe2} for \round{r}[i] together with \cref{eq_nonempty} for \round{r}[i-2] implies \cref{eq_nonempty} for \round{r}[i] with $i > 1$.

We show \cref{eq_nonempty} for \round{r}[1].
We have to show that $\orig{\rho_0}[0,r) \in \progressprophy{0}{(q_1)}{(v_0,v_1),(v'_0,v'_1)}$ for some $v'_1 \in \Succ{v_1}$.
Let $t_0$ denote $\orig{\rho_0}[r,\infty) \in \traces{\mytsys_{v'_0}}$.
\cref{eq_safe} implies that there is some $t_1 \in \traces{\mytsys_{\Succ{v_1}}}$ such that $\auti{1}[q_1]$ accepts $\combine{t_0,t_1}$.
Hence, there is a witness $t'_1 \in \traces{\mytsys_{\Succ{v_1}}}$ such that the run $\kappa_1$ of $\auti{1}[q_1]$ on $\combine{t_0,t'_1}$ is accepting and $\val{\kappa_1} = \opt{1}{q_1,v_1,t_0}$.
Clearly, $t'_1$ in $\traces{\mytsys_{v'_1}}$ for some $v'_1 \in \Succ{v_1}$.
Thus, $t_0 \in \progressprophy{0}{(q_1)}{(v_0,v_1),(v'_0,v'_1)}$.
We completed the proof for \round{r}[1].

We show \cref{eq_nonempty} for \round{r}[i] with $i > 1$.
Let $t_j$ denote $\orig{\rho_j}[r,\infty) \in \traces{\mytsys_{v'_j}}$ for all even $j < i$.
By assumption, \cref{eq_nonempty} holds in \round{r}[i-2].
Consequently, 
\[
\combine{t_0,t_2,\ldots,t_{i-3}} \in \progressprophy{i-3}{(q_1,q_3,\ldots,q_{i-2})}{(v_0,\ldots,v_{i-2}),(v'_0,\ldots,v'_{i-2})}.
\]
By definition of the above "prophecy", we obtain that there is a witness $t'_j \in \traces{\mytsys_{v'_j}}$ such that 
\[
\val{\kappa_j} = \opt{j}{q_{j+1},v_j,t_0,t'_1,t_2,t'_3,\ldots,t_{j-1}}, \text{ and $\kappa_j$ is accepting},
\]
where $\kappa_j$ is the run of $\auti{j}[q_{j}]$ on $\combine{t_0,t'_1,t_2,t'_3,\ldots,t_{j-1},t'_j}$ for all odd $j < i$.
Note that the "traces" with odd indices are the witnesses of optimality and the "traces" with even indices are the traces of "Falsifier".

\cref{eq_safe2} for \round{r}[i] implies that there is some $t \in \traces{\mytsys_{\Succ{v_i}}}$ such that $\auti{i}[q_{i}]$ accepts \[\combine{t_0,t'_1,t_2,t'_3,\ldots,t_{i-1},t}.\]
Hence, there is a witness $t'_i \in \traces{\mytsys_{\Succ{v_i}}}$ such that \[
\val{\kappa_i} = \opt{i}{q_{i+1},v_i,t_0,t'_1,t_2,t'_3,\ldots,t_{i-1}}, \text{ and $\kappa_i$ is accepting},
\]
where $\kappa_i$ is the run of $\auti{i}[q_{i}]$ on $\combine{t_0,t'_1,t_2,t'_3,\ldots,t_{i-1},t'_i}$.

Clearly, $t'_i \in \traces{\mytsys_{v'_i}}$ for some $v'_i \in \Succ{v_i}$.
Thus, $\combine{t_0,t_2,\ldots,t_{i-1}} \in \progressprophy{i-1}{\bar{q}}{\bar{u},\bar{w}}$ where the last vertex of $\bar{w}$ is $v'_i$.
We have completed the proof for \round{r}[i].
\end{proof}

%%%%%%%%%%%%%%%%%%%%%%%%%%%%%%%%%%%%%%%%%%%%%%%%%%%%%
%%%%%%%%%%%%%%%%%%%%%%%%%%%%%%%%%%%%%%%%%%%%%%%%%%%%%
%%%%%%%%%%%%%%%%%%%%%%%%%%%%%%%%%%%%%%%%%%%%%%%%%%%%%

We show an auxiliary lemma (\cref{lemma_progress}) that allows us to prove that playing according to the proposed strategies ensures that the play induces "traces" that are accepted by the $\auti{i}$ (\cref{lemma_accept}).
Intuitively, \cref{lemma_progress} states that making one move according to the proposed strategy definition (\cref{def_strategy}) yields progress towards accepting.

\begin{lemma}
\label{lemma_progress}
For each $i \in \iodd$, vectors $\bar{u} = (v_0,v_1,\ldots,v_{i})$ and $\bar{v} = (v'_0,v'_1,\ldots,v'_{i})$ of vertices such that $v'_{j} \in \Succ{v_{j}}$ for $j \leq i$, each vector $\bar{q} = (q_1,q_3,\ldots,q_{i})$ of states, and "traces" $t_j \in \traces{\mytsys_{v'_j}}$ for $j < i$ such that 
\[
\val{\kappa_j} = \opt{j}{q_{j},v_j,t_0,\ldots,t_{j-1}}, \text{ and $\kappa_j$ is accepting},
\]
where $\kappa_j$ is the run of $\auti{j}$ on $\combine{t_0,\ldots,t_j}$ for all odd $j < i$, and 
\[\combine{t_0,t_2,\ldots,t_{i-1}} \in \progressprophy{i-1}{\bar{q}}{\bar{u},\bar{v}}\]
holds the following:
\[
\opt{i}{q_{i},v_{i},t_0,\ldots,t_{i-1}} < \infty,
\]
and if $q_{i} \notin F$ then
\[
\opt{i}{q'_{i},v'_{i},t_0[1,\infty),\ldots,t_{i-1}[1,\infty)} < \opt{i}{q_{i},v_{i},t_0,\ldots,t_{i-1}},
\]
where $q'_{i} = \delta_{i}(q_{i},\lambda(v'_0),\ldots,\lambda(v'_{i}))$ and $\delta_{i}$ is the transition function of $\auti{i}$.    
\end{lemma}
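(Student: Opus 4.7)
My plan is to extract a witness run directly from the progress-prophecy hypothesis and then argue about its one-step suffix.

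For the finiteness claim, I unfold the definition of $\progressprophy{i-1}{\bar{q}}{\bar{u},\bar{v}}$. Since the assumption states $\combine{t_0,t_2,\ldots,t_{i-1}} \in \progressprophy{i-1}{\bar{q}}{\bar{u},\bar{v}}$, there exist traces $t_1, t_3, \ldots, t_i$ with $t_j \in \traces{\mytsys_{v'_j}}$ for each odd $j \leq i$, such that the run $\kappa_j$ of $\auti{j}[q_j]$ on $\combine{t_0,\ldots,t_j}$ is accepting and satisfies $\val{\kappa_j} = \opt{j}{q_j,v_j,t_0,\ldots,t_{j-1}}$. Instantiating at $j = i$ gives $\opt{i}{q_i,v_i,t_0,\ldots,t_{i-1}} = \val{\kappa_i}$. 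Because $\kappa_i$ is accepting, it visits $B$ only finitely often and $F$ at least once, so both $\lastb{\kappa_i}$ and $\firstf{\kappa_i}$ are finite, and hence $\val{\kappa_i} < \infty$.

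For the strict decrease, I consider the one-step suffix run. Let $\kappa'_i$ be the run of $\auti{i}[q'_i]$ on $\combine{t_0[1,\infty),\ldots,t_{i-1}[1,\infty),t_i[1,\infty)}$; by determinism $\kappa'_i$ is exactly $\kappa_i(1)\kappa_i(2)\cdots$, so $\kappa'_i$ remains accepting. I then compare $\val{\kappa'_i}$ to $\val{\kappa_i}$ assuming $q_i = \kappa_i(0) \notin F$, which forces $\firstf{\kappa_i} \geq 1$ and hence $\firstf{\kappa'_i} = \firstf{\kappa_i} - 1$. Case 1: $\kappa_i$ never visits $B$. Then $\kappa'_i$ does not either, so $\val{\kappa'_i} = \firstf{\kappa'_i} = \firstf{\kappa_i} - 1 = \val{\kappa_i} - 1$. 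Case 2: $\kappa_i$ visits $B$ and $q_i \notin B$. Then $\lastb{\kappa'_i} = \lastb{\kappa_i} - 1$, so $\val{\kappa'_i} = \max\{\lastb{\kappa_i},\firstf{\kappa_i}\} - 1 = \val{\kappa_i} - 1$. Case 3: $q_i \in B$ but $\lastb{\kappa_i} \geq 1$. Same formula applies. Case 4: $q_i \in B$ and $\lastb{\kappa_i} = 0$. Then $\kappa'_i$ never visits $B$, so $\val{\kappa'_i} = \firstf{\kappa'_i} = \firstf{\kappa_i} - 1 \leq \max\{0,\firstf{\kappa_i}\} - 1 = \val{\kappa_i} - 1$. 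In every case $\val{\kappa'_i} \leq \val{\kappa_i} - 1 < \val{\kappa_i}$.

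To finish, observe that $t_i \in \traces{\mytsys_{v'_i}}$ implies $t_i[1,\infty) \in \traces{\mytsys_{\Succ{v'_i}}}$, so $t_i[1,\infty)$ is admissible in the minimization defining $\opt{i}{q'_i,v'_i,t_0[1,\infty),\ldots,t_{i-1}[1,\infty)}$. Hence
\[
\opt{i}{q'_i,v'_i,t_0[1,\infty),\ldots,t_{i-1}[1,\infty)} \;\leq\; \val{\kappa'_i} \;<\; \val{\kappa_i} \;=\; \opt{i}{q_i,v_i,t_0,\ldots,t_{i-1}},
\]
which is the desired inequality.

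The main obstacle is the bookkeeping in Case 4 of the value comparison: dropping the first state of $\kappa_i$ can change the definitional branch of $\val{\cdot}$ (from the ``visits $B$'' branch to the ``never visits $B$'' branch), so the naive identity $\val{\kappa'_i} = \val{\kappa_i} - 1$ has to be checked separately there. Everything else is a direct unfolding of definitions, with the hypothesis $q_i \notin F$ used exactly once, to guarantee $\firstf{\kappa_i} \geq 1$.
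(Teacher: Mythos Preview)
Your proof is correct and follows the paper's approach: extract an optimal witness $t_i \in \traces{\mytsys_{v'_i}}$ from the prophecy, conclude $\opt{i}{q_i,v_i,t_0,\ldots,t_{i-1}} = \val{\kappa_i} < \infty$ from acceptance, and bound the shifted optimum by $\val{\kappa_i[1,\infty)}$, which is strictly smaller when $q_i \notin F$. Your four-case split for the value decrease is in fact tidier than the paper's, which asserts the identity $\val{\kappa_i[1,\infty)} + 1 = \val{\kappa_i}$ via an intermediate step on $\lastb{\kappa_i}$ that glosses over your Case~4. One minor point: when you write ``there exist traces $t_1, t_3, \ldots, t_i$'' you reuse the names of the odd-indexed traces already fixed in the lemma's hypothesis; the paper instead explicitly invokes that hypothesis so as to take the \emph{given} $t_1,\ldots,t_{i-2}$ as the first batch of prophecy witnesses and only extract the final witness $t_i$ fresh.
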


\begin{proof}
Since $\val{\kappa_j} = \opt{j}{q_{j},v_j,t_0,\ldots,t_{j-1}}$ and $\kappa_j$ is accepting for all odd $j < i$, and \[\combine{t_0,t_2,\ldots,t_{i-1}} \in \progressprophy{i-1}{\bar{q}}{\bar{u},\bar{v}},\] we obtain, using the definition of $\progressprophy{i-1}{\bar{q}}{\bar{u},\bar{v}}$, that
\[
\opt{i}{q_{i},v_{i},t_0,\ldots,t_{i-1}} < \infty.
\]
Furthermore, the above implies also that there is a "trace" $t_i \in \traces{\mytsys_{v'_i}}$ such that 
\begin{equation}\label{eq_witness}
    \opt{i}{q_{i},v_{i},t_0,\ldots,t_{i-1}} = \val{\kappa_i},
\end{equation}
where $\kappa_i$ is the run of $\auti{i}[q_{i}]$ on $\combine{t_0,\ldots,t_{i-1},t_i}$.

We show that
\begin{equation}
\label{eq_optlessthanval}
\opt{i}{q'_{i},v'_{i},t_0[1,\infty),\ldots,t_{i-1}[1,\infty)} \leq \val{\kappa_i[1,\infty)}.
\end{equation}
For \cref{eq_optlessthanval} to hold, it suffices that $t(0) = \lambda(v'_i)$ and that $\kappa_i[1,\infty)$ is a run of $\auti{i}[q'_{i}]$.
Both conditions are true.

Furthermore, we show that
\begin{equation}
\label{eq_valstep}
\val{\kappa_i[1,\infty)} + 1 = \val{\kappa_i}.
\end{equation}
Since $\val{\kappa_i} < \infty$, we have $\lastb{\kappa_i} \neq \infty$ and $\firstf{\kappa_i} < \infty$.
If $\lastb{\kappa_i} = \bot$, then $\lastb{\kappa_i[1,\infty)} = \bot$, otherwise $\lastb{\kappa_i[1,\infty)} + 1 = \lastb{\kappa_i}$.
Furthermore, $\firstf{\kappa_i[1,\infty)} + 1 = \firstf{\kappa_i}$, because $q_{i+1} \neq F$ and $\kappa_i$ starts in $q_{i+1}$.
\cref{eq_valstep} follows.

Finally, by combining \cref{eq_witness,eq_optlessthanval,eq_valstep}, we obtain
\begin{align*}
    \opt{i}{q_{i},v_{i},t_0,\ldots,t_{i-1}} & = \val{\rho} > \val{\rho[1,\infty)} \geq \opt{i}{q'_{i},v'_{i},t_0[1,\infty),\ldots,t_{i-1}[1,\infty)}.\qedhere
\end{align*}
\end{proof}

%%%%%%%%%%%%%%%%%%%%%%%%%%%%%%%%%%%%%%%%%%%%%%%%%%%%%
%%%%%%%%%%%%%%%%%%%%%%%%%%%%%%%%%%%%%%%%%%%%%%%%%%%%%
%%%%%%%%%%%%%%%%%%%%%%%%%%%%%%%%%%%%%%%%%%%%%%%%%%%%%
We show that playing according to the proposed strategies yields "traces" that make optimal progress towards being accepted (for each of the $\auti{i}$) implying that in the limit they are actually accepted.
Ultimately, to show that the "Verifier-players" win a play in the game~$\game{\tsysmanip,\psimanip}$, it suffices to show that $\auti{k-1}$ accepts the "traces" induced by the play.
However, to prove this, we need that all $\auti{i}$ for $i \in \iodd$ accept their input "traces", to ensure that the "traces" up to $k-2$ yield a valid basis to be completed with the $(k-1)$-th "trace". 

\begin{lemma}
\label{lemma_accept}
Let $\alpha$ be a play in $\game{\tsysmanip,\phimanip}$ that is consistent with the strategies constructed in \cref{def_strategy} and that satisfies \cref{assump_strategy}.
Let $\rho_0^\alpha,\ldots,\rho_{k-1}^\alpha$ be the "paths" induced by $\alpha$, and let $t_i = \orig{\rho_i^\alpha}$ for $i \in \set{0,\ldots,k-1}$.
For $r \geq 0$ and $i\in\iodd$, we obtain  
\begin{equation}
\label{eq_opti}
   \val{\kappa_i}[r,\infty) = \opt{i}{\kappa_i(r),\rho_i^\alpha(r),t_0[r,\infty),\ldots,t_{i-1}[r,\infty)},  
\end{equation}
where $\kappa_i$ is the run of $\auti{i}$ on $\combine{t_0,\ldots,t_i}$.
Furthermore, $\combine{t_0,\ldots,t_i} \in L(\auti{i})$.
\end{lemma}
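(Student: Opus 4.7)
The plan is to prove both the equation \cref{eq_opti} and the acceptance claim by outer induction on $i \in \iodd$, using Lemma~\ref{lemma_progress} as the workhorse. Under \cref{assump_strategy}, Lemma~\ref{lemma_safetytoprogress} guarantees that every "Verifier-player" acts under the \nonemptycase-case. Unpacking \cref{def_strategy}, this means precisely that at every round~$r$ the prophecy~$\progressprophy{i-1}{\bar{q}^r}{\bar{u}^r,\bar{w}^r}$ (with the vectors assembled from the positions reached in round~$r$) is satisfied by $\combine{t_0[r,\infty), t_2[r,\infty), \ldots, t_{i-1}[r,\infty)}$, since "Falsifier" has truthfully committed to the prophecy values on each of his paths.

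For the inductive step at $i$, the outer hypothesis supplies, for every odd $j < i$ and every $r$, both $\val{\kappa_j[r,\infty)} = \opt{j}{\kappa_j(r), \rho_j^\alpha(r), t_0[r,\infty), \ldots, t_{j-1}[r,\infty)}$ and the acceptance of $\kappa_j$. Combined with the prophecy from the previous paragraph, these are exactly the premises of Lemma~\ref{lemma_progress} applied at round~$r$. This yields two facts, abbreviating $O_r := \opt{i}{\kappa_i(r), \rho_i^\alpha(r), t_0[r,\infty), \ldots, t_{i-1}[r,\infty)}$: (I)~$O_r < \infty$ for all $r$, and (II)~$O_{r+1} < O_r$ whenever $\kappa_i(r) \notin F$. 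Since each $O_r$ is a non-negative integer, (II) forces $\kappa_i$ to visit $F$ infinitely often, else we would manufacture an infinite strictly decreasing sequence of non-negative integers.

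For the equality in \cref{eq_opti}, write $V_r := \val{\kappa_i[r,\infty)}$. The inequality $V_r \geq O_r$ follows because the suffix $t_i[r+1,\infty)$ (starting at the successor $\rho_i^\alpha(r+1)$ of $\rho_i^\alpha(r)$ picked by \vplayer{i}) is one of the candidates in the minimisation defining $O_r$, and its value is $V_r$ modulo the one-step alignment between $v$ and $\Succ{v}$ built into the definition of $\opt{i}{\cdot}$. For the reverse inequality, I would induct along the play: by (II), $O_{r+1} \leq O_r - 1$ while $\kappa_i(r) \notin F$, and from the structure of $\val$ on a one-pair "Rabin automaton" one checks that the difference $V_r - V_{r+1}$ is likewise exactly $1$ whenever $\kappa_i(r) \notin F$ and the suffix value remains finite. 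Matching these two recurrences, together with the behaviour at positions in $F$ (where both $V$ and $O$ re-synchronise since $\firstf = 0$ and the opt-witness beyond $F$ must avoid $B$), forces $V_r = O_r$ at every $r$. The acceptance claim $\combine{t_0,\ldots,t_i} \in L(\auti{i})$ is then immediate from (I): $V_r = O_r < \infty$ for all $r$ means $\kappa_i$ visits $F$ infinitely often and $B$ only finitely often, which is exactly the Rabin condition.

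The main obstacle is the $V_r \leq O_r$ direction: Lemma~\ref{lemma_progress} tells us opt behaves well step by step, but the actual play's continuation $t_i[r+1,\infty)$ is not a priori the witness realising~$O_r$. The key observation to exploit is that the \nonemptycase-case at round~$r+1$ forces $\rho_i^\alpha(r+1)$ to be a successor from which an optimal continuation still exists, so that the play never strays from the set of states on which the opt-value has decreased by exactly one unit, keeping the actual run in lock-step with some optimal witness.
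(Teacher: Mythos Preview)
Your induction on $i$, the invocation of Lemma~\ref{lemma_progress} to obtain (I) and (II), and the derivation of infinitely many $F$-visits from well-foundedness are all correct and match the paper's proof. The divergence, and the genuine gap, is in your argument for $V_r = O_r$.

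The recurrence-matching approach does not close. First, the step $V_r - V_{r+1} = 1$ requires $V_r < \infty$, i.e., that $\kappa_i$ visits $B$ only finitely often; but you only obtain finitely many $B$-visits \emph{after} establishing $V_r = O_r$, so the argument is circular on this point. Second, the ``re-synchronisation at $F$'' is not valid: when $\kappa_i(r)\in F$ we have $\firstf{\kappa_i[r,\infty)}=0$, so $V_r=\lastb{\kappa_i[r,\infty)}$ and likewise $O_r=\min_t \lastb{\cdot}$, but there is no reason the optimal witness avoids $B$ entirely, nor that the actual run's $\lastb$ equals the minimum. So neither the inductive step nor the anchor of your recurrence argument goes through.

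The paper instead proves $V_r = O_r$ directly from the prophecy. The inductive hypothesis for the odd $j<i$ says the play's own traces $t_1,t_3,\ldots,t_{i-2}$ already realise the optima and are accepting; combined with the prophecy membership at round~$r$ (which you correctly extract), the definition of $\progressprophy{i-1}{\bar q}{\bar u,\bar v}$ yields a witness $t'_i$ \emph{starting at the actually chosen successor} $v'_i=\rho_i^\alpha(r+1)$ with $\val{\kappa'_i}=O_r$. Since this holds at every round, the move $v_i\to v'_i$ is always a first step of an optimal continuation, and hence $\rho_i^\alpha$ itself realises $O_r$ at every $r$. Your final paragraph already identifies exactly this observation; that is the argument you should develop in place of the recurrence matching.
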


\begin{proof}
We proceed by induction on $i \in \iodd$.
We prove the statement for $i$ and assume the statement holds for odd $j < i$.

Recall that in \round{r}[i] with $r \geq 0$, we are in a position of the form
\[
((v'_0,\ldots,v'_{i-1},v_i,\ldots,v_{k-1}),q,i),
\]
where $\rho_i^\alpha(r) = v_i$.
We define
\[
x_r := \opt{i}{\kappa_i(r),v_i,t_0[r,\infty),\ldots,t_{i-1}[r,\infty)}.
\]
We claim that for any $r \geq 0$, we have $x_r < \infty$ and either $\kappa_i(r) \in F$ or $x_r > x_{r+1}$.
This implies that $\kappa_i$ is an accepting run:
The run $\kappa_i$ only visits finitely many times $B$, because $x_0 < \infty$.
And $\kappa_i$ visits $F$ infinitely many times, because there is no infinitely long decreasing sequence of natural numbers.

\cref{assump_strategy} yields that
\begin{equation}
\label{eq_prophy}
\combine{t_0[r,\infty),t_2[r,\infty),\ldots,t_{i-1}[r,\infty])} \in \progressprophy{i-1}{\bar{q}}{\bar{u},\bar{v}}, 
\end{equation}
where $\bar{u} = (v_0,\ldots,v_i)$, $\bar{v} = (v'_0,\ldots,v'_i)$, and $\bar{q} =(\kappa_1(r),\kappa_3(r),\ldots,\kappa_{i}(r))$, and $v'_\ell \in \Succ{v_\ell}$ is the move chosen in \round{r}[\ell], i.e., $v_{\ell} = \rho_{\ell}^\alpha(r)$ and $v'_{\ell} = \rho_{\ell}^\alpha(r+1)$ for $\ell \leq i$.

By induction hypothesis, the statement of \cref{lemma_accept} is true for all odd $j < i$.
Note that $\combine{t_0,\ldots,t_j} \in \auti{j}$ implies that $\combine{t_0[r,\infty),\ldots,t_j[r,\infty)} \in \auti{j}[\kappa_j(r)]$.
Thus, using \cref{eq_prophy} and \cref{eq_opti} for all odd $j < i$, the pre-conditions of \cref{lemma_progress} are satisfied.
Consequently, we get $x_r < \infty$, and if $\kappa_i(r) \notin F$, 
\begin{align*}
    x_{r+1} & = \opt{i}{\kappa_i(r+1),v'_i,t_0[r+1,\infty),\ldots,t_{i-1}[r+1,\infty)} \\
    & < \opt{i}{\kappa_i(r),v_i,t_0[r,\infty),\ldots,t_{i-1}[r,\infty)} = x_r,
\end{align*}
which shows that $\kappa_i$ is accepting, i.e., $\combine{t_0,\ldots,t_i} \in L(\auti{i})$.

It is left to prove that \cref{eq_opti} holds, i.e., for all $r\geq 0$, we have 
\[
x_r = \opt{i}{\kappa_i(r),\rho_i^\alpha(r),t_0[r,\infty),\ldots,t_{i-1}[r,\infty)} = \val{\kappa_i}[r,\infty).
\]
Recall that $\rho_i^\alpha(r) = v_i$ and that $\rho_i^\alpha(r+1) = v'_i$, and let $\kappa_i(r) = q$.
By definition of $\opt{i}{\cdot}$, there is a "path" $\rho$ with $\rho(0) \in \Succ{v_i}$ such that $\val{\kappa} = x_r$, where $\kappa$ is the run of $\auti{i}[q]$ on $\combine{t_0[r,\infty),\ldots,t_{i-1}[r,\infty),\lambda(\rho)}$.
By \cref{eq_prophy} and \cref{eq_opti} for all odd $j < i$, we obtain that $\rho$ can be chosen such that $\rho(0) = v'_i \in \Succ{v_i}$.

Hence, the "path" $\rho_i^\alpha$ ensures that $x_r = \val{\kappa_i}[r,\infty)$ is true for all $r \geq 0$.
\end{proof}

%%%%%%%%%%%%%%%%%%%%%%%%%%%%%%%%%%%%%%%%%%%%%%%%%%%%%
%%%%%%%%%%%%%%%%%%%%%%%%%%%%%%%%%%%%%%%%%%%%%%%%%%%%%
%%%%%%%%%%%%%%%%%%%%%%%%%%%%%%%%%%%%%%%%%%%%%%%%%%%%%

Finally, we are ready to prove completeness.
\begin{proof}[Proof of \cref{thm_completness}.]
We prove that the strategies defined above form a winning collection.
To this end, let $\alpha$ be a play in $\game{\tsysmanip, \phimanip}$ that is consistent with the strategies and satisfies \cref{assump_strategy}.
Plays in which the assumption is violated are trivially won by the "Verifier-players".
Furthermore, let $\rho_0^\alpha,\ldots,\rho_{k-1}^\alpha$ be the paths induced by $\alpha$, and let $t_i = \orig{\rho_i^\alpha}$ for $i \in \set{0,\ldots,k-1}$.
Since the premise of $\psimanip$ holds, we must show that its conclusion $\mypsi$ holds, i.e., that $\auti{k-1}$ accepts  $\combine{t_0,\ldots,t_{k-1}}$.
This is implied by \cref{lemma_accept} for $i = k-1$.
\end{proof}

We directly obtain our main result by combining \cref{thm_soundprophy} and \cref{thm_completness}.

\begin{theorem}
\label{thm_main}
Let $\Xi$ and $\calP$ as in \cref{def_manip}, and assume that each $\auti{i}$ is a one-pair "Rabin automaton".
The coalition of "Verifier-players" has a winning collection of strategies for $\game{\tsysmanip, \phimanip}$  if and only if $\mytsys \models \myphi$.
\end{theorem}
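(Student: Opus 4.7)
The theorem is an immediate consequence of combining the two technical results that precede it, so the plan is essentially to observe that no new work is required. The forward direction ("$\game{\tsysmanip, \phimanip}$ is won by the Verifier-players implies $\mytsys \models \myphi$") follows from \cref{thm_soundprophy}, which was already proven to hold independently of the specific choice of prophecies, provided only that $\Xi$ and $\calP$ satisfy the disjointness and variable-scope requirements of \cref{def_propmani}. I would first verify that the sets $\Xi$ and $\calP$ introduced in \cref{def_manip} do indeed satisfy these requirements: each sentence $\pformula{i}{\bar{q}}{\bar{u},\bar{v}} \in \Xi_i$ uses only "trace variables" $\pi_j$ with even $j \leq i$ by construction, and the propositions $\pp_i$ are disjoint from $\ap$ and from each other, as well as from the marker propositions $\markprop_i$. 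With this, \cref{thm_soundprophy} directly yields the forward implication.

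For the backward direction ("$\mytsys \models \myphi$ implies $\game{\tsysmanip, \phimanip}$ is won by the Verifier-players"), I would simply invoke \cref{thm_completness}, whose hypothesis (each $\auti{i}$ being a one-pair "Rabin automaton") is exactly the assumption of the theorem. This lemma directly provides a winning collection of strategies in $\game{\tsysmanip, \phimanip}$ with $\Xi$ and $\calP$ as in \cref{def_manip}.

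There is no real obstacle here, as both halves of the equivalence have been established in full detail already: the only thing to check is that the hypotheses align, which they do by definition. The proof can therefore be written in a single sentence, noting that the theorem is the conjunction of \cref{thm_soundprophy} (instantiated with the $\Xi$ and $\calP$ of \cref{def_manip}) and \cref{thm_completness}.
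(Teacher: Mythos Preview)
Your proposal is correct and matches the paper's own argument exactly: the paper also states that the theorem is obtained directly by combining \cref{thm_soundprophy} (soundness, for any admissible $\Xi$ and $\calP$) with \cref{thm_completness} (completeness, under the one-pair Rabin assumption). Your additional remark that the $\Xi$ and $\calP$ from \cref{def_manip} satisfy the requirements of \cref{def_propmani} is a sensible sanity check, but the paper treats this as evident and does not spell it out.
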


\subsection{Beyond One-Pair Rabin Automata}
\label{subsec_rabin}

We have proven \cref{thm_main} for the case that all automata referred to in the "prophecies" are one-pair "Rabin automata".
We sketch how to drop this constraint closely following Beutner and Finkbeiner~\cite{BF}.

A "Rabin automaton" $\mathcal R$ with pairs $(B_1,F_1),\ldots,(B_m,F_m)$ can be seen as a union of $m$ one-pair "Rabin automata"~$(B_i,F_i)$-$\mathcal R$. 
A run that is accepting in $(B_i,F_i)$-$\mathcal R$ is also accepting in $\mathcal R$ and each accepting run of $\mathcal{R}$ is accepting in some $(B_i,F_i)$-$\mathcal R$.
Before we present how to go from one-pair to general "Rabin automata", we recall that the "prophecies" exclusively talk about "traces" provided by "Falsifier". 
Hence, if "Falsifier" signals that his future "traces" belong to "prophecy" $\calP$ which, for example, implies that "Rabin automaton"~$\mathcal{R}$ accepts them, then "Falsifier" can also specify the pair(s) for which $\mathcal{R}$ will accept. No further context from the "Verifier-players" is needed.

The idea is to annotate our "prophecies" (which refer to $\auti{1},\auti{3},\ldots,\auti{k-1}$) with the indices of the pairs used for acceptance.
More concretely, for a "prophecy" $\progressprophy{i}{\bar{q}}{\bar{u},\bar{v}}$, we introduce "prophecies" of the form $\progressprophy{i}{\bar{q},\bar{a}}{\bar{u},\bar{v}}$, where $\bar{a}$ is a vector of indices of the same length as $\bar{q}$.
Say the $j$-th entry of $\bar{q}$ is $q_\ell$, then the "prophecy" $\progressprophy{i}{\bar{q}}{\bar{u},\bar{v}}$ refers to $\auti{\ell}[q_\ell]$ (among other automata), and say the $j$-th entry of $\bar{a}$ is $a_j$, then $\progressprophy{i}{\bar{q},\bar{a}}{\bar{u},\bar{v}}$ refers to the one-pair "Rabin automaton"~$(B_{a_j},F_{a_j})$-$\auti{\ell}[q_\ell]$ (among other automata).
Except for the refined "prophecies", no other changes to the game are required.

In a play of the game $\game{\tsysmanip,\phimanip}$, "Falsifier" specifies with his "prophecies" for which pairs the "Rabin automata"~$\auti{1},\auti{3},\ldots,\auti{k-1}$ can accept.

In \round{0}[0], it is specified for $\auti{1}$, in \round{0}[1], \vplayer{1} chooses one option specified by "Falsifier" for $\auti{1}$.
That choice is simply made by her move.
In \round{0}[0], "Falsifier" moves from $\bullet$ to $v_0$, and in \round{0}[1], \vplayer{1} moves from $\bullet$ to $v_1$.
Say $\prophecies{v_0}$ indicates that $(B_i,F_i)$-$\auti{1}$ accepts when \vplayer{1} moves to $v_1$, then she has chosen $(B_i,F_i)$-$\auti{1}$.
She then (in order be guaranteed to build a winning play) has to stick to it for the rest of the play, meaning that from now on all her moves must have a target vertex~$v$ such that the "prophecy" given by "Falsifier" in the move before indicates that $(B_i,F_i)$-$\auti{1}$ accepts if she moves to $v$.

In \round{0}[2], it is specified also for $\auti{3}$, in \round{0}[3], \vplayer{3} chooses one option for $\auti{3}$ and then has to stick to it for the rest of the play.
Additionally, she must pick the option chosen by \vplayer{1} for $\auti{1}$ to ensure consistency.
In \round{0}[2], "Falsifier" moves from $\bullet$ to $v_2$. 
For example, say \vplayer{3} has the option to move from $\bullet$ to $v'$, $v''$, or $v'''$ with $\prophecies{v_2}$ indicating that $(B_i,F_i)$-$\auti{1}$ and $(B_j,F_j)$-$\auti{3}$ accept if she moves to $v'$, that $(B_k,F_k)$-$\auti{1}$ and $(B_j,F_j)$-$\auti{3}$ accept if she moves to $v''$, and that $(B_i,F_i)$-$\auti{1}$ and $(B_\ell,F_\ell)$-$\auti{3}$ accept if she moves to $v'''$.
Then she must pick $v'$ or $v'''$ because \vplayer{1} has committed to  $(B_i,F_i)$-$\auti{1}$ which \vplayer{3} must honor. 
If \vplayer{3} moves to $v'$ she picks $(B_j,F_j)$-$\auti{3}$, if she moves to $v'''$ she picks $(B_\ell,F_\ell)$-$\auti{3}$.

The same principle applies for the next rounds.
After \round{0}[k-1], for each automaton a pair has been fixed to which the "Verifier-players" stick for the rest of the play.

%%%%%%%%%%%%%%%%%%%%%%%%%%%%%%%%%%%%%%%%%%%%%%%%%%%%%
%%%%%%%%%%%%%%%%%%%%%%%%%%%%%%%%%%%%%%%%%%%%%%%%%%%%%
%%%%%%%%%%%%%%%%%%%%%%%%%%%%%%%%%%%%%%%%%%%%%%%%%%%%%

We end by substantiating the claim in \cref{fn_nf} on \cpageref{fn_nf} that our construction can also be applied to formulas with arbitrary, i.e., not strictly alternating, trace quantifier prefixes.

\begin{remark}
\label{remark_nonstrictalternation}
    In case of of arbitrary trace quantifier prefixes, we have one player in the coalition for each maximal block of existentially quantified "trace variables" in the prefix, who selects "traces" for these "variables".  
    For example, for a formula of the form
    \[\forall \pi_0 \exists \pi_1 \exists \pi_2\forall \pi_3 \exists \pi_4 \exists \pi_5 .\ \psi\]
    with trace quantifier-free $\psi$, we have just two players (say, Player~$12$ and Player~$45$) in the coalition. Player~$12$ is in charge of the "variables"~$\pi_1$ and $\pi_2$, and Player~$45$ is in charge of the "variables"~$\pi_4$ and $\pi_5$.
    These two players are in a coalition against "Falsifier", who is in charge of the "variables"~$\pi_0$ and $\pi_3$.
    In each round, "Falsifier" picks a move for $\pi_0$, then Player~$12$ picks a move for $\pi_1$ and a move for $\pi_2$,
    then "Falsifier" picks a move for $\pi_3$, and then Player~$45$ picks a move for $\pi_4$ and a move for $\pi_5$.
    
    Note that just adding universally quantified dummy "trace variables" between any pair of consecutive existentially quantified "trace variables" is logically equivalent, but increases the complexity of our approach (as that depends on the number of players).
\end{remark}

%%%%%%%%%%%%%%%%%%%%%%%%%%%%%%%%%%%%%%%%%%%%%%%%%%%%%
%%%%%%%%%%%%%%%%%%%%%%%%%%%%%%%%%%%%%%%%%%%%%%%%%%%%%
%%%%%%%%%%%%%%%%%%%%%%%%%%%%%%%%%%%%%%%%%%%%%%%%%%%%%
\section{Conclusion}

We have presented the first sound and complete game-based algorithm for full \hyltl (even \hyqptl) model-checking via multi-player games with hierarchical information, thereby extending the prophecy-based framework of Coenen et al.\ and Beutner and Finkbeiner from the $\forall^*\exists^*$-fragment to arbitrary quantifier prefixes. 
Similarly, we have extended Winter and Zimmermann's game-based characterization of the existence of computable Skolem functions witnessing $\tsys \models \phi$. 
One way of understanding our result is that we compute finite-state implementations of Skolem functions via transducers with $\omega$-regular lookahead.

In further work, we aim to exhibit lower bounds on the number of prophecies needed for completeness.
Recall that \hyltl model-checking is \tower-complete.
Even an elementary number of prophecies, while unlikely, would not contradict any complexity-lower bounds, as solving multi-player games of hierarchical information is already \tower-complete (in the number of players).
However, as it is, the number of prophecies we use is only bounded by an $n$-fold exponential, where $n$ is linear in the number of quantifier alternations (taking both trace and propositional quantification into account), as the size of the automata~$\auti{i}$ grows like that.

One of the motivations for the introduction of the game-based approach using prophecies is the need for complementation of $\omega$-automata in the classical automata-based \hyltl model-checking algorithm. 
For $\forall^*\exists^*$-formulas, the game-based approach \myquot{only} requires deterministic $\omega$-automata, which can be directly computed from the quantifier-free part, which is essentially an \ltl formula.
On the other hand, the automata-based approach needs automata complementation, already for $\forall^*\exists^*$-formulas.
For arbitrary quantifier alternations, both approaches require complementation.
In future work, we aim to study whether nondeterministic or history-deterministic automata are sufficient to construct complete prophecies.

Finally, we are currently extending the techniques developed here to even more expressive logics, e.g., \hyrechml, recursive Hennessy-Milner logic with trace quantification~\cite{DBLP:conf/forte/AcetoAAF22}. 

\bibliographystyle{plain}
\bibliography{bib}

\end{document}